\newtheorem{theorem}{Theorem}
\newtheorem{definition}{Definition}
\theoremstyle{empty}
\newtheorem{proof}{Proof}
\newcommand\DoToC{%
  \startcontents
  \printcontents{}{1}{}
}
\newcommand{\siyuan}[1]{\textcolor{blue}{siyuan: #1}}
\newcommand{\ind}{\perp\!\!\!\!\perp} 
\tikzstyle{state}=[
\tikzstyle{midstate}=[
\tikzstyle{smallstate}=[
\tikzset{
    point/.style = {circle, draw, inner sep=0.05cm,fill,node contents={}},
}
\title{{\em Do} Finetti: on Causal Effects for Exchangeable Data }
\author{%
Siyuan Guo$^{14}$ \thanks{Correspondence to \texttt{siyuan.guo@tuebingen.mpg.de}} \quad Chi Zhang$^2$\quad Karthika Mohan$^3$ \quad Ferenc Huszár$^{4\dagger}$ \quad Bernhard Schölkopf$^{1\dagger}$  \\
$^1$Max Planck Institute for Intelligent Systems \quad $^2$Toyota Research Institute \\
$^3$ Oregon State University \quad $^4$ University of Cambridge\\
$\dagger$ Equal supervision \\
}
\begin{document}

\maketitle

\begin{abstract}
We study causal effect estimation in a setting where the data are not i.i.d.\ (independent and identically distributed). We focus on exchangeable data satisfying an assumption of independent causal mechanisms. Traditional causal effect estimation frameworks, e.g., relying on structural causal models and do-calculus, are typically limited to i.i.d.\ data and do not extend to more general exchangeable generative processes, which naturally arise in multi-environment data. To address this gap, we develop a generalized framework for exchangeable data and introduce a truncated factorization formula that facilitates both the identification and estimation of causal effects in our setting. To illustrate potential applications, we introduce a causal Pólya urn model and demonstrate how intervention propagates effects in exchangeable data settings. Finally, we develop an algorithm that performs simultaneous causal discovery and effect estimation given multi-environment data. 
\end{abstract}

\section{Introduction}
Inferring causal effects from observational data is a central task for scientific applications from health and epidemiology to the social and behavioral sciences. Scientists aim to understand Nature's mechanisms to discover feasible intervention targets and estimate intervention effects. The existing causal effect estimation theory is often formulated for structural causal models \citep{pearlcausality2009}, focusing on the study of \textit{independent and identically distributed} (i.i.d.) data. Indeed, the do-calculus \citep{pearl2012calculus} and its extensions and applications (e.g., mediation analysis \citep{pearl2022direct}, stochastic interventions \citep{correa2020calculus}) developed in the i.i.d.\ framework has been widely used in practice.

A more recent line of work relaxes the i.i.d.\ assumption and considers causal inference under the assumption of \textit{independent causal mechanisms} (ICM) \citep{scholkopf2012causal, pearlcausality2009, guo_causal_2023}, which postulates that distinct causal mechanisms of the true underlying generating process do not inform or influence one another. The {\em Causal de Finetti} theorems of \cite{guo_causal_2023} provide a mathematical justification of the ICM assumption in exchangeable data-generation processes, thus establishing a foundation for studying causality in exchangeable data. Exchangeable data generation processes that adhere to the \textit{ICM} principle are referred to as {\em ICM generative processes}, or \textit{ICM-exchangeable data}.

\cite{guo_causal_2023} showed that data from ICM generative processes, or more succinctly ICM-exchangeable data, 
are more informative than i.i.d.\ data in that they often permit unique causal structure identification in cases where i.i.d.\ data does not. In contrast, the present work focuses on causal effect identification and estimation. Standard formalisms in the i.i.d.\ framework, e.g., using structural causal models and do-operators, do not apply in the exchangeable non-i.i.d.\ setting. We study the meaning of interventions,  feasible intervention targets, and finally causal effect identification and estimation for ICM generative processes. In doing so, we make three contributions: \looseness = -1
\begin{itemize}
    \item \textbf{Causal effect in ICM generative processes (Section \ref{sec:causal_effect_markovian_model_exchangeable})}: We establish the operational meaning of interventions and the feasible intervention targets in ICM generative processes that differ from the traditional framework in i.i.d.\ data. 
    \item Our main \textbf{causal effect identification and estimation theorem (Section \ref{subsec:causal_effect_identifiability_icm_generative_processes})} provides an explicit truncated factorization formula for ICM generative processes that solves the problems of identification and estimation of causal effects. In particular, we introduce a \textbf{causal Pólya urn model (Section \ref{subsec:conditioned_post_interventional_distributions})} and show that the post-interventional distribution changes when conditioning on other observations -- a property that does not exist in i.i.d.\ data. \looseness = -1
    \item \textbf{Causal effect estimation in multi-environment data (Section \ref{sec:causal_effect_multi_env})}: we connect ICM processes with multi-environment data, showing that extending the causal framework to exchangeable data does not necessarily mean a decrease of its ability in graphical identification and effect estimation.
    Theorem \ref{thm:causal_effect_identifiability_icm_without_graph}, informally stated, shows that with no prior graphical assumptions, data adhering to the ICM principle and generated by an exchangeable process 
    alone is sufficient for both graphical and effect identification. We developed a \textit{Do-Finetti} algorithm for multi-environment data, and empirically validate our results in Section \ref{sec:experiments}. \looseness = -1
\end{itemize}

Fig. ~\ref{fig:causal_effect_bivariate_difference_comparisons} illustrates the main differences in causal effects between i.i.d.\ and ICM generative processes.

\begin{figure}
\centering
   \resizebox{.8\columnwidth}{!}{
\begin{tikzpicture}

    \node[midstate] (xi) at (-1, 1.5) {$X_i$};
    \node (hammer) at (-2,1.7) {\includegraphics[angle = 60, scale = 0.02]{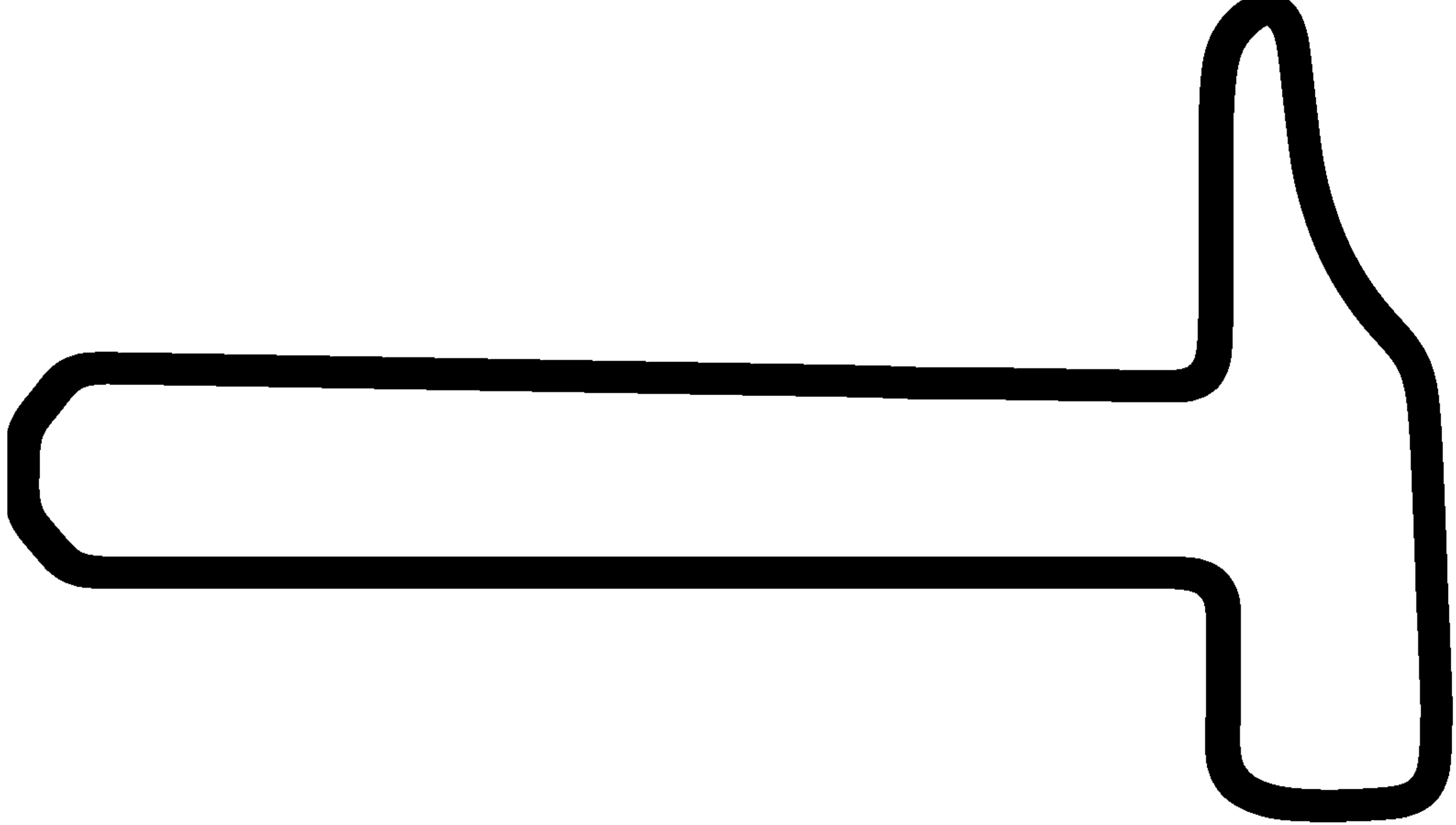}};
    \node[midstate] (yi) at (1, 1.5) {$Y_i$};
    \path[->] (xi) edge[line width = 1pt] (yi);
    \plate [inner sep=.1cm,xshift=.02cm,yshift=.01cm] {plate0} {(xi)(yi)} {};
    \node at (0, 2.4) {\textbf{(a): i.i.d.\ process}};
    
    \node[midstate] (xe) at (-1, -1.2) {$X_i$};
    \node[midstate] (ye) at (1, -1.2) {$Y_i$};
    \node (hammer) at (-2,-1) {\includegraphics[angle = 60, scale = 0.02]{hammer3.pdf}};
    \node[rectangle, fill = gray!20, thick, draw = black] (theta) at (-1, 0) {$\theta$};
    \node[rectangle, fill = gray!20, thick, draw = black] (psi) at (1, 0) {$\psi$};
    \path[->] (xe) edge[line width = 1pt] (ye);
    \path[->] (theta) edge[line width = 1pt] (xe);
    \path[->] (psi) edge[line width = 1pt] (ye);
    \plate [inner sep=.1cm,xshift=.02cm,yshift=.02cm] {plateE1} {(xe)(ye)} {};
    \plate [inner sep=.1cm,xshift=0.02cm,yshift=.02cm] {plateE2} {(plateE1)(theta)(psi)} {}; 
    \node at (0, -2.8) {\textbf{(b): ICM generative process}};

    \node at (2, 2.4) {\textbf{\Large{A}}};
    \node (iid-do) at (6, 2) {\textbf{(a)}: $P(Y|do(X=x)) = P(Y|x)$};
    \node (exch-do) at (6, 1) {\textbf{(b)}: $P(Y|do(X=x)) = \int P(Y|x, \psi) p(\psi) d\psi$};
    \plate [inner sep=.1cm,xshift=0.02cm,yshift=.02cm] {plateE2} {(iid-do)(exch-do)} {};

    \node at (2, -0.3) {\textbf{\Large{B}}};
    \node (iid-cond-do) at (6, -1) {\textbf{(a)}: $P(Y_1|do(X_1=x), Y_2, X_2) = P(Y_1|x)$};
    \node (exch-cond-do) at (6, -2) {\textbf{(b)}: $P(Y_1|do(X_1=x), Y_2, X_2) = P(Y_1|x, Y_2, X_2)$};
    \plate [inner sep=.1cm,xshift=0.02cm,yshift=.02cm] {plateE2} {(iid-cond-do)(exch-cond-do)} {};
    \node[midstate] (x1) at (11, 1.5) {$X_1$} ;
    \node[midstate] (y1) at (13, 1.5) {$Y_1$} ;
    \node[midstate] (x2) at (11, -1.1) {$X_2$} ;
    \node (hammer) at (10,1.7) {\includegraphics[angle = 60, scale = 0.02]{hammer3.pdf}};
    \node[midstate] (y2) at (13, -1.1) {$Y_2$} ;
    \node[rectangle, fill = gray!20, thick, draw = black] (theta) at (11, 0.2) {$\theta$};
    \node[rectangle, fill = gray!20, thick, draw = black] (psi) at (13, 0.2) {$\boldsymbol{\psi}$};
    \path[->] (x1) edge[line width = 1pt] (y1);
    \path[->] (x2) edge[line width = 1pt] (y2);
    
    \path[->] (theta) edge[line width = 1pt] (x2);
    \path[->] (psi) edge[line width = 1pt] (y1);
    \path[->] (psi) edge[line width = 1pt] (y2);
    \plate [inner sep=.2cm,xshift=.02cm,yshift=.02cm] {plateUnroll1} {(x1)(y2)} {}; 
    \node at (12, -2.8) {\textbf{(c)}};
    
\end{tikzpicture}}
\caption{A bivariate illustration demonstrates differences in causal effect between i.i.d.\ processes and ICM-generative processes. Suppose $\mathcal{G} = X \to Y$. The hammer represents an intervention on the closest node. (a): Data generated according to $\mathcal{G}$ under an i.i.d.\ process; (b): Data generated under an ICM-generative process (using plate notation). Block \textbf{A} shows how $P(Y|do(X))$ differs between the i.i.d.\ (a) and  the exchangeable (b) case. Note that the causal effect under {\em i.i.d.} is a special case of that under \emph{exchangeable} processes with $p(\psi) = \delta(\psi = \psi_0)$, for some value $\psi_0$. Corollary \ref{corollary:identical_marginal_postintervention} below justifies that we omit position indices from Block \textbf{A} in ICM-generative processes. 
Block \textbf{B} shows the difference in intervention effect when conditioned on other observations, for i.i.d.\ (a) and ICM-generative (b) processes. In the i.i.d.\ case (a), due to $(Y_1, X_1) \ind (Y_2, X_2)$, conditioning on $(X_2, Y_2)$ conveys no information on the prediction of the interventional effect on $Y_1$. In contrast, for an ICM-generative process (b), observing $(X_2, Y_2)$ does provide additional information about the effect on $Y_1$ when intervening on $X_1$. 
Graph (c) illustrates the graph surgery performed on $\text{ICM}(\mathcal{G})$ (cf. Def. \ref{def:icm-operator} in Appendix \ref{sec:graphical_terminology}). We observe that conditioning on a collider $Y_2$ provides additional information about the effect on $Y_1$ when intervening on $X_1$. }
\label{fig:causal_effect_bivariate_difference_comparisons}
\end{figure}

\section{Preliminaries}

\textbf{Notation} Denote $X$ as a random variable with realization $x$ with boldsymboled $\mathbf{X}$ as a set of random variables with realizations $\mathbf{x}$. Data generated by either i.i.d.\ or ICM generative processes is a sequence of random variables $\mathbf{X}_{:;1}, \mathbf{X}_{:;2}, \mathbf{X}_{:;3}, \ldots$, where $\mathbf{X}_{:;n} := (X_{1;n}, \ldots, X_{d;n})$ and $d$ denotes the number of variable indices and $n$ denotes the position index within a sequence. For example, $X_{i;n}$ denotes the $i$-th random variable observed at the $n$-th position in a sequence. Often in this work, $N$ denotes the number of positions observed in a sequence and $[N]:=\{1, \ldots, N\}$ as the set of positive integers less than or equal to $N$ and $[\neg \mathbf{S}]$ denotes the set of positive integers less than equal to $N$ excluding values contained in $\mathbf{S}$. To abbreviate notation, write $\mathbf{X}_{i;[N]} := (X_{i;1}, \ldots, X_{i;N})$. For simplicity of illustration, we use bivariate examples with $X_n, Y_n$ as a pair of random variables observed at the $n$-th position. Putting the bivariate notation in the above multivariate context, $X_{1;n}$ corresponds to $X_n$ and $X_{2;n}$ corresponds to $Y_n$ where the first index indicates different variables in the same position. Denote uppercase $P$ for probability distribution and lowercase $p$ for probability density functions.

\subsection{The Causal Framework in i.i.d.\ Data}
\label{sec:causal-framework-iid}
\textbf{Structural Causal Model} \citep{peters2017elements, pearlcausality2009, Haavelmo_Hendry_Morgan_1995, Wright1921CorrelationAndCausation}
A structural causal model (SCM) $\mathcal{M}:= (\mathbf{F}, P_\mathbf{U})$ consists of a set of structural assignments $\mathbf{F}:= \{f_1, \ldots, f_d\}$ such that $X_j := f_j(\textbf{PA}_j, U_j), j = 1, \ldots, d$, where $\textbf{PA}_j \subseteq \{X_1, \ldots, X_d\} \backslash \{X_j\}$ and are often called as parents of $X_j$. The joint distribution $P_\mathbf{U}$ over the noise or exogenous variables $\mathbf{U}$ is assumed to be jointly independent. When such a condition is satisfied, the SCM is a Markovian model that assumes the absence of unmeasured confounders. A graph $\mathcal{G}$ of an SCM is obtained by creating one vertex $X_j$ and drawing directed edges from each parent in $\textbf{PA}_j$ to $X_j$. We assume $\mathcal{G}$ is acyclic. Our lack of knowledge in noise variables $P_\mathbf{U}$ and together with the structural assignments $\mathbf{F}$ induces a joint distribution over the observable variables  $P(X_1, \ldots, X_d)$. Such joint distribution satisfies the \textit{Markov factorization property} with respect to $\mathcal{G}$:
\begin{equation}
    P(X_1, \ldots, X_d) = \prod_{i=1}^d P(X_i \mid \textbf{PA}_i),
\label{eq:markovfactorization}
\end{equation}

Given two disjoint sets of variables $\mathbf{X}$ and $\mathbf{Y}$, the \textbf{causal effect} of $\mathbf{X}$ on $\mathbf{Y}$, denoted as $P(\mathbf{Y}|do(\mathbf{X}))$, is defined with respect to modifications of an existing SCM (also known as \textit{graph surgery}): for each realization $\mathbf{x}$ of $\mathbf{X}$, $P(\mathbf{y}|do(\mathbf{x}))$ gives the probability of $\mathbf{Y}=\mathbf{y}$ induced by deleting from the SCM all structural assignments corresponding to variables in $\mathbf{X}$ and substituting $\mathbf{X} = \mathbf{x}$ in the remaining equations. In Markovian models, given a graph, causal effect is identifiable via Eq.~\eqref{eq:truncated_factorization_iid}:
\begin{align}
\label{eq:truncated_factorization_iid}
        P(X_1, \ldots, X_d | do(\mathbf{X=\mathbf{x}})) = 
            \prod_{i: X_i \not \in \mathbf{X}} P(X_i | \textbf{PA}_i) \big |_{\mathbf{X}=\mathbf{x}},
\end{align}
where $|_{\mathbf{X}=\mathbf{x}}$ enforces $X_1, \ldots, X_d$ to be consistent with realizations of $\mathbf{X}$ else Eq.~\eqref{eq:truncated_factorization_iid} takes value $0$. 
This principled approach is known as \textit{g-computation formula} \citep{Robins1986ANA}, \textit{truncated factorization} \citep{pearlcausality2009} or \textit{manipulation theorem} \citep{Spirtes1993CausationPA}. Appendix \ref{sec:graphical_terminology} details the standard graphical terminology. \looseness=-1

\textbf{Independent Causal Mechanism (ICM)} postulates how Markov factors (hereon referred to as \textit{causal mechanisms}) in Eq.~\eqref{eq:markovfactorization} should relate to each other. \citet{scholkopf2012causal} and \citet{ peters2017elements} express the insights as follows: 
\begin{center}
    Causal mechanisms are independent of each other in the sense that a change in one mechanism $P(X_i \mid \textbf{PA}_i)$ does not inform or influence any of the other mechanisms $P(X_j \mid \textbf{PA}_j)$, for $i \neq j$. 
\end{center}
This notion of invariant, independent, and autonomous mechanisms has appeared in many forms throughout the history of causality research: from early work led by \citet{Haavelmo1944} and \citet{Aldrich1989} to \citet{pearlcausality2009}. Studying properties of \textit{independent causal mechanisms} rigorously demands a statistical understanding of what such independence means between distributions. \cite{guo_causal_2023} provides a statistical formalization of \textit{ICM} in exchangeable data, thus providing necessary tools to study causal framework in exchangeable data. The next section introduces relevant background. 

\subsection{The Causal Framework in Exchangeable Data}
\label{sec:causal-framework-exchangeable}
\begin{definition}[Exchangeable Sequence]
An exchangeable sequence of random variables is a finite or infinite sequence $X_1, X_2, X_3, \ldots$ such that for any finite permutation $\sigma$ of the position indices $\{1, \ldots, N\}$, the joint distribution of the permuted sequences remains unchanged to that of original:
\begin{align}
    P(X_{\sigma(1)}, \ldots, X_{\sigma(N)}) = P(X_1, \ldots, X_N)
\label{eq:exchangeable_sequence}
\end{align}
\end{definition}
Note an \textit{independent and identically distributed} (i.i.d.) sequence of random variables is an exchangeable sequence, i.e., $P(X_1, \ldots, X_N) = \prod_{i=1}^N P(X_i)$, but not all exchangeable sequences are i.i.d.\ Examples of exchangeable non-i.i.d.\ sequences include but are not limited to Pólya urn model \citep{hoppe_polya-like_1984}, Chinese restaurant processes \citep{aldous1985exchangeability} and Dirichlet processes \citep{ferguson1973bayesian}. 

An important result of exchangeable data are de Finetti's theorems \citep{deFinetti1931} which show any exchangeable sequence can be represented as a mixture of conditionally i.i.d.\ data. Building upon the work of \cite{deFinetti1931}, \cite{guo_causal_2023} observes that exchangeable but not i.i.d.\ data possess extra conditional independence relationships compared to i.i.d.\ data. This enables:
\begin{itemize}
    \item unique causal structure identification (Theorem 5 in \citep{guo_causal_2023}), and
    \item statistical formalization of ICM principle (causal de Finetti theorems in \citep{guo_causal_2023})
\end{itemize}
In contrast to previous work's focus on structure identification, this work studies causal effect identification and estimation in exchangeable data. Markovian models here mean there are no unmeasured confounders for each tuple of random variables observed in an exchangeable sequence. 

To start with, we introduce the necessary terminologies inherited from \cite{guo_causal_2023}. 
\begin{definition}[ICM generative process]
\label{def:icm_generative_process}
    We say data is generated from an ICM generative process with respect to a DAG $\mathcal{G}$ if the sequence of random variable arrays $\{\mathbf{X}_{:;n}\}$ is infinitely exchangeable and satisfies $X_{i;[n]} \ind \overline{\textbf{ND}}_{i;[n]}, \textbf{ND}_{i;n+1} | \textbf{PA}_{i;[n]}$ for all $i \in [d]$ and $n \in \mathbb{N}$, where $\textbf{PA}_{i}$ denotes parents of node $X_{i}$, $\textbf{ND}_{i}$ denotes the corresponding non-descendants and $\overline{\textbf{ND}}_{i}$ denotes the set of non-descendants excluding its own parents. By causal de Finetti theorems, it is equivalent to say the joint distribution of the sequence can be represented as:
    \begin{align}
    P(\mathbf{X}_{:;[N]} = \mathbf{x}_{:, [N]}) = \int \int \prod_{n=1}^N \prod_{i=1}^d p(x_{i;n} \mid \textbf{pa}_{i;n}^{\mathcal{G}}, \boldsymbol{\theta_i}) d\nu_1(\boldsymbol{\theta_1}) \dots d\nu_d(\boldsymbol{\theta_d}),
    \label{eq:icm_generative_process}  
    \end{align}
where $\nu_i$ are probability measures.
\end{definition}

\begin{wrapfigure}[10]{r}{0.4\textwidth}
\vspace{-0.9cm}
\begin{center}
\begin{tikzpicture}
    \node(x1) at (7, 1.5) {$X_{1;1}$} ;
    \node (y1) at (8.5, 1.5) {$X_{2;1}$} ;
    \node (z1) at (10, 1.5) {$X_{3;1}$} ;
    \node (x2) at (7, 0) {$X_{1;2}$} ;
    \node (y2) at (8.5, 0) {$X_{2;2}$} ;
    \node (z2) at (10, 0) {$X_{3;2}$} ;
    \node[rectangle, fill = gray!20, thick] (theta) at (7, 0.75) {\tiny{$\theta_1$}};
    \node[rectangle, fill = gray!20, thick] (psi) at (8.5, 0.75) {\tiny{$\theta_2$}};
    \node[rectangle, fill = gray!20, thick] (theta3) at (10, 0.75) {\tiny{$\theta_3$}};
    \path[->] (y1) edge[line width = 1pt] (x1);
    \path[->] (y1) edge[line width = 1pt] (z1);
    \path[->] (y2) edge[line width = 1pt] (x2);
    \path[->] (y2) edge[line width = 1pt] (z2);
    
    \path[->] (theta) edge[line width = 0.01pt] (x1);
    \path[->] (theta) edge[line width = 0.01pt] (x2);
    \path[->] (psi) edge[line width = 0.01pt] (y1);
    \path[->] (psi) edge[line width = 0.01pt] (y2);
    \path[->] (theta3) edge[line width = 0.01pt] (z1);
    \path[->] (theta3) edge[line width = 0.01pt] (z2);
\end{tikzpicture}
\end{center}
\vspace{-0.4cm}
\caption{An example of $\text{ICM}(\mathcal{G})$: Two data tuples are generated by an ICM generative process with respect to $\mathcal{G}:= X_1 \leftarrow X_2 \to X_3$, where $X_{i;n}$ is the $i$-th variable in the $n$-th position and \color{gray}{gray} \color{black}{means latent variables.} }
\label{fig:notation-example}
\end{wrapfigure}
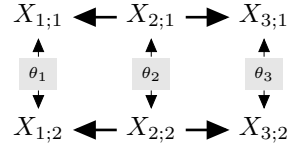

 In this work, we assume that any probability measure has a corresponding density function. Following Definition \ref{def:icm_generative_process}, an immediate result is any distribution $P$ generated by an ICM generative process is Markov to $\text{ICM}(\mathcal{G})$. Definition \ref{def:icm-operator} in Appendix \ref{sec:graphical_terminology} defines formally what we mean by an ICM operator on a DAG $\mathcal{G}$. Here we illustrate an example of $\text{ICM}(\mathcal{G})$ and the use of our notation. Consider a DAG $\mathcal{G}:= X_1 \leftarrow X_2 \to X_3$ with $3$ variable indices. Two data tuples are generated under an ICM generative process with respect to $\mathcal{G}$. Fig. \ref{fig:notation-example} shows the Markov structure compatible with the above ICM generative process.

\begin{figure}
\centering
   \begin{minipage}{.5\textwidth}
        \centering
        \begin{tikzpicture}
    
    \node[midstate] (x) at (-1, -1.2) {$X$};
    \node[midstate] (y) at (2, -1.2) {$Y$};
    \node[rectangle, fill = gray!20, thick, draw = black] (theta) at (-1, 0) {$U_X$};
    \node[rectangle, fill = gray!20, thick, draw = black] (psi) at (2, 0) {$U_Y$};
    \path[->] (x) edge[line width = 1pt] (y);
    \path[->] (theta) edge[line width = 1pt] (x);
    \path[->] (psi) edge[line width = 1pt] (y);
    \node at (0, 1) {\textbf{(a): Structural Causal Model}};
    \node at (-1, -2) {\textbf{Observational}};
    \node at (-1, -2.5) {$X:= U_X$};
    \node at (-1, -3) {$Y := f(X, U_Y)$};
    \plate [inner sep=.1cm,xshift=.02cm, yshift=0.2cm, dotted, thick] {plateE1} {(x)(y)(theta)(psi)} {};%
    \node at (-2, 0) {\textbf{\textit{i.i.d.}}};
    \node at (2, -2) {$\textbf{do(X = x)}$};
    \node at (2, -2.5) {$X:=x$};
    \node at (2, -3) {$Y:=f(x, U_Y)$};
    \end{tikzpicture}
        \label{fig:SCM}
    \end{minipage}%
    \begin{minipage}{0.5\textwidth}
        \centering
\begin{tikzpicture}
    
    \node[midstate] (xe) at (-1, -1.2) {$X$};
    \node[midstate] (ye) at (2, -1.2) {$Y$};
    \node[rectangle, fill = gray!20, thick, draw = black] (theta) at (-1, 0) {$\theta$};
    \node[rectangle, fill = gray!20, thick, draw = black] (psi) at (2, 0) {$\psi$};
    \path[->] (xe) edge[line width = 1pt] (ye);
    \path[->] (theta) edge[line width = 1pt] (xe);
    \path[->] (psi) edge[line width = 1pt] (ye);

    \node at (0, 1) {\textbf{(b): ICM generative process}};
    \node at (-1, -2) {\textbf{Observational}};
    \node at (-1, -2.5) {$\theta \sim p(\theta), \psi \sim p(\psi)$};
    \node at (-1.5, -3) {$X \sim p(x|\theta), Y \sim p(y|x, \psi)$};

    \node at (2, -2) {$\textbf{do(X = x)}$};
    \node at (2, -2.5) {$X \sim \delta(X=x)$};
    \node at (2, -3) {$Y \sim p(y|x, \psi)$};
    \plate [inner sep=.01cm,xshift=.02cm,yshift=0.1cm, dotted, thick] {plate0} {(xe)(ye)} {};
    \node at (-2, 0) {\textbf{\textit{exch.}}};
    \end{tikzpicture}
        \label{fig:CdF}
    \end{minipage}
\caption{An illustration of differences in what the do-operator does between a structural causal model (a) and an ICM generative process (b). In the observational phase, SCMs (a), where the dotted plate indicates i.i.d.\ sampled, illustrates that fixed assignment of $U_X$ and $U_Y$ leads to fixed observable values $X$ and $Y$; on the other hand, for ICM-generative processes where \textit{exch.} is an abbreviation for an exchangeable process, fixing $\theta, \psi$ does not fix $X$ and $Y$, instead, it means sampling from a fixed distribution. Because SCM fails to characterize ICM-generative processes, we define the operational meaning of do-interventions on ICM-generative processes as assigning $\delta$-distribution to the intervened variables and substituting the corresponding values in the remaining distributions.}
\label{fig:do-operator-operational-difference}
\end{figure}
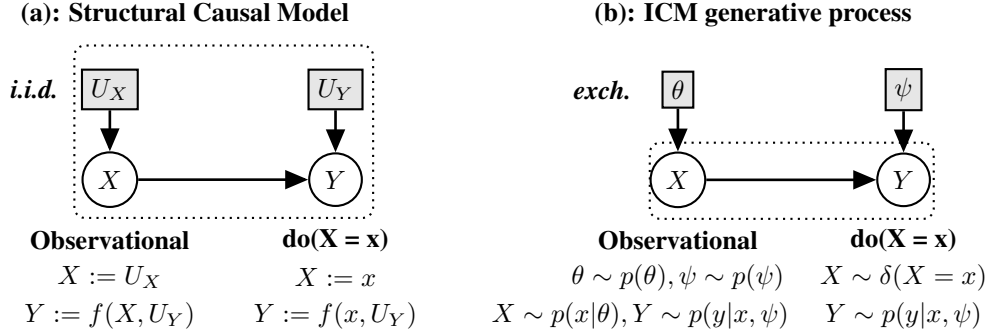

\section{Causal Effect in Exchangeable Data}
\label{sec:causal_effect_markovian_model_exchangeable}
We first motivate our study of causal effects in exchangeable data by noting exogenous variables are different from causal de Finetti parameters. Traditional causal effect in i.i.d.\ processes is defined as graph surgery with respect to SCMs (cf. Section \ref{sec:causal-framework-iid}). Such a definition does not transfer its operational meaning to that in exchangeable processes because SCMs fail to characterize ICM generative processes. In this section, we first define what causal effect means and show their differences in properties compared to that of i.i.d.\ processes. We proceed to illustrate in the simplest possible case - a pair of random variables, and then provide a general statement for the multivariate case. \looseness = -1

Figure \ref{fig:do-operator-operational-difference} illustrates how SCMs fail to characterize ICM generative processes in the bivariate example. In SCMs, a fixed assignment of values to exogenous variables $U_X$ and $U_Y$ uniquely determines the values of all the observable variables $X$ and $Y$. This is not the case for ICM generative processes. A fixed assignment to causal de Finetti parameters $\theta$ and $\psi$ only restricts the observable variables $X$ and $Y$ to sample from fixed distributions but does not uniquely determine their values. Therefore, do-operator commonly defined as graph surgery in SCMs demands a new operational meaning in ICM generative processes. Definition \ref{def:causal_effect_in_exchangeable} defines that $do(X=x)$ in ICM generative processes means assigning the sampling density $p(x|\theta)$ to $\delta(X = x)$.\footnote{ \url{https://encyclopediaofmath.org/index.php?title=Delta-function}}. 
Doing so, we have:
\begin{align}
    \textbf{i.i.d.\ generative process}: &P(Y=y|do(X=x)) = p(y|x, \psi_0) = P(Y=y|x), \psi_0 \in \mathbb{R} \\
    \textbf{ICM generative process}: &P(Y=y|do(X=x)) = \int p(y|x, \psi) p(\psi) d\psi = P(Y=y|x)
\label{eq:icm-iid-causal-effect-implications}
\end{align}

Despite being identical in expression, causal effect has different implications in i.i.d.\ and ICM generative processes. Under i.i.d., the randomness captured in $P$ is driven only by the randomness in exogenous variables $U_Y$. Under the ICM generative process, the randomness in $P$ is driven both by $p(y|x, \psi)$ and the randomness in the causal de Finetti parameter $p(\psi)$. It is well-known that i.i.d is a subcase of exchangeability. Here we observe that the causal effect expression follows the same pattern, i.e., the causal effect expression in i.i.d.\ is also a subcase of that under exchangeability whenever $p(\psi) = \delta(\psi=\psi_0)$. 

Equipped with this operational meaning of intervention, we next consider the set of feasible intervention targets. Here we first clarify what we mean by the data-generating process. Data generated from i.i.d.\ or ICM processes refers to a sequence of random variables. For example, in the bivariate case, the sequence is $(X_1, Y_1), (X_2, Y_2), \dots$. Often one omits position indices in i.i.d.\ because:
\begin{align}
    \textbf{i.i.d.\ generative process}: P(X_1, Y_1, \ldots, X_N, Y_N) \stackrel{ind}{=\joinrel=} \prod_{n=1}^N P(X_n, Y_n) \stackrel{ide}{=\joinrel=} [P(X, Y)]^N
    \label{eq:iid-dgp}
\end{align}
The first equality is due to independence and the second equality is due to identical distributions. One thus does not differentiate the position indices in i.i.d.\ as $P(X, Y)$ characterizes the joint distribution fully by Eq.~\eqref{eq:iid-dgp}. Intervention is thus defined only on $X$ and $Y$ rather than $X_n$ and $Y_n$. However, in the ICM generative process, $P(X, Y)$ cannot fully characterize the joint distribution. An example application of Definition \ref{def:icm_generative_process} in the bivariate case gives:
\begin{align}
 \textbf{ICM gen. process}: P(x_1, y_1, \ldots, x_N, y_N) = \int \int \prod_{n=1}^N p(y_n | x_n, \psi) p(x_n | \theta) d\mu(\theta) d\nu(\psi)
\label{eq:exch-dgp}
\end{align}
Eq.~\eqref{eq:exch-dgp} shows that one can no longer omit the position indices in ICM generative processes, as $P(X, Y)$ cannot fully characterize the joint distribution. Intervention in ICM generative processes thus should be considered at the level of $X_n$ and $Y_n$. Definition \ref{def:causal_effect_in_exchangeable}, presented below, formalizes the concept of causal effect in ICM generative processes. 
In particular, both intervention sets and target variables of interest can be any random variables observed in the sequence. 

\begin{definition}[Causal Effect in ICM generative processes]
\label{def:causal_effect_in_exchangeable} 
Let  $\mathbf{X}$ and $\mathbf{Y}$ be two disjoint sets of variables generated by an ICM-generative process.  For each realization $\mathbf{x}$ of $\mathbf{X}$, $P(\mathbf{y}|do(\mathbf{x}))$ in the ICM-generative process denotes the probability of $\mathbf{Y}=\mathbf{y}$ induced by assigning $p(x_{i;n} | \textbf{pa}_{i;n}^\mathcal{G}, \theta_i) = \delta(X_{i;n} = x_{i;n})$ in Eq.~\eqref{eq:icm_generative_process}, $\forall X_{i;n} \in \mathbf{X}$, and substituting $\mathbf{X} = \mathbf{x}$ in the remaining conditional distributions. 
\end{definition}

Although marginal distribution $P(X, Y)$ cannot characterize the joint distribution in Eq.~\eqref{eq:exch-dgp} due to mutual dependence of $X_n, Y_n$ with causal de Finetti parameters $\theta$ and $\psi$, we note variables in different positions share identical marginal distributions. Mathematically, data generated under an exchangeable process shares identical marginal distributions: $P(X_n, Y_n) = P(X_m, Y_m), \forall n\neq m$. This also means that identical interventions performed on variables in different positions result in the same post-interventional distributions: $P(Y_n | do(X_n =x)) = P(Y_m | do(X_m = x)), \forall n \neq m$. Similar to i.i.d., quantities such as $P(Y|do(X=x))$  are thus well-defined in ICM generative processes. Corollary \ref{corollary:identical_marginal_postintervention} proves the multivariate equivalent of identical marginal intervention effects in ICM generative processes. \looseness=-1 

\begin{restatable}[Identical marginal post-interventional distributions]{corollary}{identicalmarginalpostintervention}
\label{corollary:identical_marginal_postintervention}
Let $P$ be the distribution for some ICM generative process. Let $\mathbf{I}$ and $\mathbf{J}$ be two disjoint subsets in $[d]:=\{1, \ldots, d\}.$ Denote $\mathbf{X}_{\mathbf{I};n} := \{X_{i;n}: i \in \mathbf{I}\}$ and similarly for $\mathbf{X}_{\mathbf{J};n}$. Then, 
\begin{align}
    P(\mathbf{X}_{J;n} \mid do(\mathbf{X}_{I;n} = \mathbf{x})) = P(\mathbf{X}_{J;m} \mid do(\mathbf{X}_{I;m} = \mathbf{x})), \forall n \neq m
\end{align}
i.e., identical interventions on variables in different positions share the same marginal post-interventional distributions. See Appendix \ref{sec:proof_of_corollary_idential_marginal} for the proof.
\end{restatable}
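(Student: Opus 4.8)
The plan is to prove the statement by direct computation from the integral representation in Eq.~\eqref{eq:icm_generative_process} together with the operational definition of intervention in Definition~\ref{def:causal_effect_in_exchangeable}, and to show that after marginalization the resulting expression carries no dependence on the position label. Since we assume every measure $\nu_i$ admits a density, I would write the full joint over all $N$ positions and the latents as $p(\mathbf{x}_{:;[N]}, \boldsymbol{\theta}) = \prod_{i=1}^d \nu_i(\boldsymbol{\theta_i}) \prod_{k=1}^N \prod_{i=1}^d p(x_{i;k} \mid \textbf{pa}_{i;k}^{\mathcal{G}}, \boldsymbol{\theta_i})$, writing $\nu_i$ also for its density. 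Applying $do(\mathbf{X}_{I;n} = \mathbf{x})$ replaces, at position $n$ only, each factor $p(x_{i;n} \mid \textbf{pa}_{i;n}^{\mathcal{G}}, \boldsymbol{\theta_i})$ with $i \in \mathbf{I}$ by $\delta(x_{i;n} = x_i)$ and substitutes $\mathbf{x}$ for those coordinates wherever they appear as parents in the remaining position-$n$ factors. Crucially, the factors at every other position $k \neq n$ are left untouched, because the intervention is position-local.

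First I would marginalize out every variable except the target $\mathbf{X}_{J;n}$. The essential observation is that the variables at any position $k \neq n$ enter only through the product $\prod_{i=1}^d p(x_{i;k} \mid \textbf{pa}_{i;k}^{\mathcal{G}}, \boldsymbol{\theta_i})$, which is a valid conditional density of $\mathbf{X}_{:;k}$ given $\boldsymbol{\theta}$ factorizing according to the DAG $\mathcal{G}$; integrating out its arguments in reverse topological order of $\mathcal{G}$ makes each conditional integrate to one, so the whole position-$k$ block integrates to unity and drops out. What then remains is
\begin{align}
    P(\mathbf{X}_{J;n} = \mathbf{y} \mid do(\mathbf{X}_{I;n} = \mathbf{x})) = \int \prod_{i=1}^d \nu_i(\boldsymbol{\theta_i}) \, g(\mathbf{y}, \mathbf{x}, \boldsymbol{\theta}) \, d\boldsymbol{\theta},
\end{align}
where $g$ collects the intervened, substituted, and partially marginalized position-$n$ factors. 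Since $g$ is built entirely from the per-position mechanism, the fixed index sets $\mathbf{I}, \mathbf{J}$, and the values $\mathbf{x}, \mathbf{y}$, it does not depend on the label $n$; replacing $n$ by $m$ yields the identical expression, which proves the claim.

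The main obstacle I anticipate is the careful bookkeeping of the $\delta$-factors and the parent-substitution step, to confirm that the intervention genuinely couples variables only within position $n$ and therefore does not obstruct the ``integrate to one'' argument for the other positions. Equivalently, one may phrase the argument via the permutation symmetry of Definition~\ref{def:icm_generative_process}: the transposition $\sigma = (n\; m)$ leaves the integrand of Eq.~\eqref{eq:icm_generative_process} invariant while mapping the intervention $do(\mathbf{X}_{I;n})$ and target $\mathbf{X}_{J;n}$ to $do(\mathbf{X}_{I;m})$ and $\mathbf{X}_{J;m}$; exchangeability then gives the equality directly. I would present the marginalization argument as the primary proof and note the permutation view as the conceptual reason the identity holds.
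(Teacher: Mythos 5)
Your proof is correct, but it takes a different route from the paper's. The paper first proves a small standalone result (its Theorem \ref{thm:exchangeable_implies_identical_marginal}) that exchangeability alone forces identical single-position marginals $P(\mathbf{X}_{:;n}) = P(\mathbf{X}_{:;m})$ via the transposition $(n\,m)$, and then argues that the post-interventional distribution for an intervention confined to position $n$ is a functional of that marginal alone (via the truncated factorization), so identical marginals yield identical post-interventional distributions. You instead compute directly from the de Finetti integral representation in Eq.~\eqref{eq:icm_generative_process}: apply the $\delta$-replacement of Definition~\ref{def:causal_effect_in_exchangeable}, observe that the intervention is position-local because $\textbf{pa}_{i;k}^{\mathcal{G}}$ lives entirely within position $k$, integrate out each untouched position-$k$ block to unity given $\boldsymbol{\theta}$ (leaf-first, as you say), and note that the surviving position-$n$ factors are the same functions for every position label. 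Your closing remark about the transposition symmetry is essentially the germ of the paper's argument, so you have both proofs in hand. The trade-off: the paper's route is shorter and isolates a reusable fact (identical marginals), but it leans on the slightly informal claim that the single-position interventional query is determined by the single-position marginal through the i.i.d.-style truncated factorization; your explicit marginalization makes that dependence concrete and verifies in passing that the other positions genuinely decouple, at the cost of more bookkeeping with the latents $\boldsymbol{\theta}_i$ and the $\delta$-factors. Both are valid; no gap.
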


One can thus omit the notation of position indices when appropriate as supported by Corollary \ref{corollary:identical_marginal_postintervention} and as illustrated in Figure \ref{fig:causal_effect_bivariate_difference_comparisons} Block \textbf{A}. 
In the next section, we address a focal problem in causal effect estimation: causal effect identifiability in ICM generative processes. 

\subsection{Causal Effect Identifiability in ICM generative processes}
\label{subsec:causal_effect_identifiability_icm_generative_processes}
Continuing with the bivariate example in Fig. \ref{fig:do-operator-operational-difference} (b), consider $X_1, Y_1, X_2, Y_2$ is generated under an ICM generative process with respect to $\mathcal{G}:= X \to Y$. Suppose one performs hard intervention on $X_1$, i.e., $\text{do}(X_1 = \hat{x})$. Applying traditional truncated factorization developed for i.i.d.\ data (Eq.~\eqref{eq:truncated_factorization_iid}) yields: 
\looseness = -1
\begin{align}
P(x_1, y_1, x_2, y_2 | do(X_1 = \hat{x})) = P(y_1 | \hat{x}) P(y_2 | x_2) P(x_2) \mathbbm{1}_{x_1 = \hat{x}}
\label{eq:iid_bivariate_trunc}
\end{align}
However, the independence between $(X_1, Y_1)$ and $(X_2, Y_2)$ in i.i.d.\ generative processes does not hold in ICM generative processes. In fact, in ICM generative processes, applying Definition \ref{def:causal_effect_in_exchangeable} gives:  
\begin{align}
    P(x_1, y_1, x_2, y_2 | do(X_1=\hat{x})) &= \int p(y_1  | \hat{x}, \psi) p(y_2 | x_2, \psi) p(\psi) d\psi p(x_2)\mathbbm{1}_{x_1 = \hat{x}}
\label{eq:icm_bivariate_trunc}
\end{align}    
Eq.~\eqref{eq:icm_bivariate_trunc} does not equal to Eq. ~\eqref{eq:iid_bivariate_trunc} whenever $p(\psi) \neq \delta(\psi = \psi_0)$ for some $\psi_0$. Thus, for data generated under ICM generative processes, a new theorem for truncated factorization is required. See Theorem \ref{theorem:truncated_factorization} for the statement for general multivariate distributions and Appendix \ref{sec:proof_of_truncated_factorization_exchangeable} for a detailed derivation.  \looseness = -1

\begin{restatable}[Truncated Factorization in ICM generative processes]{theorem}{truncatedfactorization}
\label{theorem:truncated_factorization}
    For a given graph $\mathcal{G}$, let $P$ be  the probability
    distribution for data generated under an ICM generative process with respect to $\mathcal{G}$ and let $p$ be the corresponding density. The post-interventional distribution after intervening on $\mathbf{X}=\hat{\mathbf{x}}$ has density given by:
    \begin{align}
        p(\mathbf{x}_{:;1}, \ldots, \mathbf{x}_{:;N} | do(\mathbf{X} = \hat{\mathbf{x}})) = \prod_{i \in I_\mathbf{X}} p(\mathbf{x}_{i;[\neg \mathbf{N}_i]} | \textbf{pa}_{i;[\neg \mathbf{N}_i]}^\mathcal{G}) \prod_{i \not \in I_\mathbf{X}} p(\mathbf{x}_{i;[N]} | \textbf{pa}_{i;[N]}^\mathcal{G}) \big|_{\mathbf{X}=\hat{\mathbf{x}}},
    \label{eq:truncated_factorization_exchangeable}
    \end{align}
    where $\mathbf{I}_\mathbf{X}:=\{i: X_{i;n} \in \mathbf{X}\}$ denotes the set of variable indices being intervened on and $\mathbf{N}_i:= \{n: X_{i;n} \in \mathbf{X}\}$ denotes the set of position indices corresponding to variable index $i$ in the intervention set $\mathbf{X}$ and $[\neg \mathbf{N}_i]$ denotes the set of positive integers less than or equal to $N$ excluding values in $\mathbf{N}_i$.  
\end{restatable}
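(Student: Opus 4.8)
The plan is to start from the de Finetti representation in Eq.~\eqref{eq:icm_generative_process} and exploit the fact that each latent parameter $\boldsymbol{\theta_i}$ governs only the factors belonging to variable index $i$. First I would observe that, holding the realizations $\mathbf{x}_{:;[N]}$ fixed, the integrand factorizes across variable indices: regrouping the product $\prod_{n=1}^N \prod_{i=1}^d$ as $\prod_{i=1}^d \prod_{n=1}^N$ and using that the measures $\nu_1, \ldots, \nu_d$ are independent, the joint density splits as
\begin{align}
p(\mathbf{x}_{:;[N]}) = \prod_{i=1}^d \int \prod_{n=1}^N p(x_{i;n} \mid \textbf{pa}_{i;n}^\mathcal{G}, \boldsymbol{\theta_i}) \, d\nu_i(\boldsymbol{\theta_i}) =: \prod_{i=1}^d p(\mathbf{x}_{i;[N]} \mid \textbf{pa}_{i;[N]}^\mathcal{G}).
\end{align}
This is the exchangeable analogue of the Markov factorization~\eqref{eq:markovfactorization}, and it is the structural backbone of the argument: the parents appearing in the factor for index $i$ are realizations of \emph{other} variables, hence fixed constants under the integral over $\boldsymbol{\theta_i}$, so separating the $d$ integrals is legitimate.

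Next I would apply Definition~\ref{def:causal_effect_in_exchangeable} factor by factor. For a non-intervened index $i \notin I_\mathbf{X}$, no factor is altered and the $\boldsymbol{\theta_i}$-integral simply reproduces $p(\mathbf{x}_{i;[N]} \mid \textbf{pa}_{i;[N]}^\mathcal{G})$, with any parent coordinates lying in $\mathbf{X}$ substituted by their intervened values. For an intervened index $i \in I_\mathbf{X}$, the intervention replaces $p(x_{i;n} \mid \textbf{pa}_{i;n}^\mathcal{G}, \boldsymbol{\theta_i})$ by $\delta(X_{i;n}=x_{i;n})$ precisely at the positions $n \in \mathbf{N}_i$, leaving the factors at positions $n \in [\neg \mathbf{N}_i]$ untouched. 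The key step is that the delta factors carry no dependence on $\boldsymbol{\theta_i}$, so they pull out of the integral:
\begin{align}
&\int \prod_{n \in \mathbf{N}_i} \delta(X_{i;n}=x_{i;n}) \prod_{n \in [\neg \mathbf{N}_i]} p(x_{i;n} \mid \textbf{pa}_{i;n}^\mathcal{G}, \boldsymbol{\theta_i}) \, d\nu_i(\boldsymbol{\theta_i}) \notag \\
&\qquad = \prod_{n \in \mathbf{N}_i} \delta(X_{i;n}=x_{i;n}) \cdot p(\mathbf{x}_{i;[\neg \mathbf{N}_i]} \mid \textbf{pa}_{i;[\neg \mathbf{N}_i]}^\mathcal{G}),
\end{align}
where the surviving integral defines the marginalized conditional over the non-intervened positions.

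Finally I would reassemble the product over all $i$. The non-intervened indices contribute $\prod_{i \notin I_\mathbf{X}} p(\mathbf{x}_{i;[N]} \mid \textbf{pa}_{i;[N]}^\mathcal{G})$ and the intervened indices contribute $\prod_{i \in I_\mathbf{X}} p(\mathbf{x}_{i;[\neg \mathbf{N}_i]} \mid \textbf{pa}_{i;[\neg \mathbf{N}_i]}^\mathcal{G})$, while the collected delta factors $\prod_{i \in I_\mathbf{X}} \prod_{n \in \mathbf{N}_i} \delta(X_{i;n}=x_{i;n})$ are exactly the consistency constraint encoded by $\big|_{\mathbf{X}=\hat{\mathbf{x}}}$, vanishing unless the intervened coordinates equal $\hat{\mathbf{x}}$. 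This reproduces Eq.~\eqref{eq:truncated_factorization_exchangeable}.

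I expect the main obstacle to be bookkeeping rather than a genuine difficulty: carefully justifying the separation of the $d$ integrals (which relies on the latents being variable-specific) and handling the delta functions rigorously, either through a limiting argument with approximations to the identity or by reading ``substituting $\mathbf{X}=\hat{\mathbf{x}}$'' in the conditioning-and-constraining sense of Definition~\ref{def:causal_effect_in_exchangeable}. It is also worth verifying that the reduced conditional $p(\mathbf{x}_{i;[\neg \mathbf{N}_i]} \mid \textbf{pa}_{i;[\neg \mathbf{N}_i]}^\mathcal{G})$ is well defined and consistent: since each $p(x_{i;n}\mid\cdot,\boldsymbol{\theta_i})$ integrates to one in $x_{i;n}$, marginalizing the intervened positions out of the full observational conditional recovers the same object, so the notation introduced above agrees with the observational marginal.
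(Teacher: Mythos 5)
Your overall strategy matches the paper's: factor the joint by variable index, apply Definition~\ref{def:causal_effect_in_exchangeable} factor by factor, and pull the $\boldsymbol{\theta_i}$-free delta functions out of each integral. The computation for the intervened indices is exactly Eq.~\eqref{eq:markov_with_intervened_nodes} in the paper's proof, and your closing remark about marginal consistency of the reduced conditional is the right justification for the case distinction there.

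There is one step you pass over by fiat that the paper has to work for. You write
\begin{align}
\int \prod_{n=1}^N p(x_{i;n} \mid \textbf{pa}_{i;n}^\mathcal{G}, \boldsymbol{\theta_i}) \, d\nu_i(\boldsymbol{\theta_i}) =: p(\mathbf{x}_{i;[N]} \mid \textbf{pa}_{i;[N]}^\mathcal{G}),
\end{align}
introducing the right-hand side as \emph{notation} for the left. But the content of the theorem is that the post-interventional density is expressible in terms of the genuine conditional densities of the observational distribution $P$ --- otherwise the formula says nothing about identification or estimation. That the integral above coincides with the conditional density of $\mathbf{X}_{i;[N]}$ given $\textbf{PA}_{i;[N]}^\mathcal{G}$ computed from $P$ is a claim, not a definition. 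The paper establishes it by decomposing $P$ via the chain rule in reverse topological order, invoking the causal conditional de Finetti theorem (Theorem 7 of \citealp{guo_causal_2023}) to identify $P(\mathbf{x}_{i;[N]}\mid\textbf{up}_{i;[N]})$ with the integral form, and then using the conditional independence $X_{i;[n]} \ind \overline{\textbf{ND}}_{i;[n]} \mid \textbf{PA}_{i;[n]}$ to reduce the conditioning set from all upstream variables to the parents alone. Your argument can be repaired along similar lines without citing that theorem --- order the indices reverse-topologically, integrate out the downstream factors (each integrates to one by Fubini) to show the integral equals $p(\mathbf{x}_{i;[N]}\mid\textbf{up}_{i;[N]})$, and then note that since this depends on the upstream variables only through the parents it also equals $p(\mathbf{x}_{i;[N]}\mid\textbf{pa}_{i;[N]}^\mathcal{G})$ --- but as written the identification is asserted rather than proved, and it is precisely the point where the ICM structure (one latent per mechanism) does real work.
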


Theorem \ref{theorem:truncated_factorization} presents a procedure for computing the joint post-interventional distribution using pre-interventional conditional distributions when intervening on any set of variables in the Markovian model. This  demonstrates that causal effects are identifiable in Markovian models under ICM generative processes. Note that the traditional truncated factorization in i.i.d.\ processes is again a special case of Eq.~\eqref{eq:truncated_factorization_exchangeable} just as i.i.d.\ processes are a special case of exchangeable processes. \looseness =-1

\subsection{Conditional Interventional distributions}
\label{subsec:conditioned_post_interventional_distributions}



The fundamental difference between i.i.d. and ICM generative processes lies in the violation of the independence condition inherent to i.i.d..  Consequently, interventional distributions computed by conditioning on observations differ between the two. Specifically, causal effect of $do(X_1=\hat{x})$ on $Y_1$ given $X_2, Y_2$ differs when computed under i.i.d. (Eq.~\eqref{eq:cond_iid}) and ICM generative processes (Eq.~\eqref{eq:cond_icm}). This difference arises because $(X_1, Y_1) \ind (X_2, Y_2)$ holds in i.i.d but not in ICM generative processes. 
\begin{align}
    \textbf{i.i.d.\ generative processes}: &P(Y_1 | do(X_1 = \hat{x}), X_2, Y_2) = P(Y_1 | \hat{x}) \label{eq:cond_iid} \\
    \textbf{ICM generative processes}: &P(Y_1 | do(X_1 = \hat{x}), X_2, Y_2) = P(Y_1 | \hat{x}, X_2, Y_2) \label{eq:cond_icm}
\end{align}
Fig. \ref{fig:causal_effect_bivariate_difference_comparisons} (c) depicts this example of intervention via graph surgery on $\text{ICM}(\mathcal{G})$ where the operational meaning of edge deletion is explained in Figure \ref{fig:do-operator-operational-difference}. We observe that conditioning on the collider node $Y_2$ in the ICM generative process renders $Y_1 \not \ind X_2 | Y_2$. Lemma \ref{lemma:conditional_post_interventional_distribution} provides the corresponding general statement for the multivariate case when conditioning on other observations. See Appendix \ref{sec:proof_of_conditional_post_interventions} for the proof. \looseness = -1

\begin{restatable}[Intervention effect conditioned on other observations]{lemma}{conditionalpostintervention}
\label{lemma:conditional_post_interventional_distribution}
For a given graph $\mathcal{G}$, let $P$ be the distribution for the ICM generative process with respect to $\mathcal{G}$. Let $\mathbf{X}$ be the intervention set. Assume $\mathbf{X} = \mathbf{X}_{\mathbf{I};n} := \{X_{i;n}: \forall i \in \mathbf{I} \}$. Let $\mathbf{S} \subseteq [N]$ such that $n \not \in \mathbf{S}$ and $[\neg \mathbf{I}]$ denotes $[d] \backslash \mathbf{I}$. Then, 
\begin{align}
    P(\mathbf{X}_{\neg \mathbf{I}; n} | do(\mathbf{X}_{\mathbf{I};n} = \hat{\mathbf{x}}), \mathbf{X}_{:;\mathbf{S}}) = \prod_{i \not \in \mathbf{I}} P(\mathbf{X}_{i;n} | \mathbf{X}_{i;\mathbf{S}}, \textbf{PA}_{i;\mathbf{S}\cup \{n\}}) |_{\mathbf{X}_{\mathbf{I};n}=\hat{\mathbf{x}}}
\end{align}
\end{restatable}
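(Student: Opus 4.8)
The plan is to derive the claim directly from the multivariate truncated factorization (Theorem~\ref{theorem:truncated_factorization}) by specializing it to the intervention $\mathbf{X}=\mathbf{X}_{\mathbf{I};n}$ and then forming the requested conditional as a ratio of the post-interventional joint to its marginal over the conditioning block $\mathbf{X}_{:;\mathbf{S}}$. For this intervention we have $\mathbf{I}_{\mathbf{X}}=\mathbf{I}$ and $\mathbf{N}_i=\{n\}$ for every $i\in\mathbf{I}$, so $[\neg\mathbf{N}_i]=[N]\setminus\{n\}$; Theorem~\ref{theorem:truncated_factorization} then expresses $p(\mathbf{x}_{:;[N]}\mid do(\mathbf{X}_{\mathbf{I};n}=\hat{\mathbf{x}}))$ as $\prod_{i\in\mathbf{I}}p(\mathbf{x}_{i;[N]\setminus\{n\}}\mid\textbf{pa}^{\mathcal{G}}_{i;[N]\setminus\{n\}})\prod_{i\notin\mathbf{I}}p(\mathbf{x}_{i;[N]}\mid\textbf{pa}^{\mathcal{G}}_{i;[N]})$ evaluated at $\mathbf{X}_{\mathbf{I};n}=\hat{\mathbf{x}}$.

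First I would reduce the number of positions. Writing $\mathbf{R}:=[N]\setminus(\mathbf{S}\cup\{n\})$, I claim that marginalizing this post-interventional density over $\mathbf{x}_{:;\mathbf{R}}$ gives the analogous post-interventional density over the positions $\mathbf{S}\cup\{n\}$ only. The cleanest way to see this is to re-expand each collapsed factor through Eq.~\eqref{eq:icm_generative_process} into its de~Finetti mixture $\int\prod_m p(x_{i;m}\mid\textbf{pa}_{i;m},\theta_i)\,d\nu_i(\theta_i)$; because every position $m\in\mathbf{R}$ is unintervened it carries all $d$ mechanism factors, and for fixed parameter values $\int\prod_{i=1}^d p(x_{i;m}\mid\textbf{pa}_{i;m},\theta_i)\,d\mathbf{x}_{:;m}=1$ by the usual reverse-topological-order argument, so each position in $\mathbf{R}$ integrates away to $1$. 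Hence I may assume without loss of generality that $[N]=\mathbf{S}\cup\{n\}$, and the post-interventional joint becomes $\prod_{i\in\mathbf{I}}p(\mathbf{x}_{i;\mathbf{S}}\mid\textbf{pa}^{\mathcal{G}}_{i;\mathbf{S}})\prod_{i\notin\mathbf{I}}p(\mathbf{x}_{i;\mathbf{S}\cup\{n\}}\mid\textbf{pa}^{\mathcal{G}}_{i;\mathbf{S}\cup\{n\}})\big|_{\mathbf{X}_{\mathbf{I};n}=\hat{\mathbf{x}}}$.

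Next I would form the conditional. The only random coordinates left are the conditioning block $\mathbf{X}_{:;\mathbf{S}}$ and the target $\mathbf{X}_{\neg\mathbf{I};n}$ (the coordinates $\mathbf{X}_{\mathbf{I};n}$ being pinned to $\hat{\mathbf{x}}$), so $P(\mathbf{X}_{\neg\mathbf{I};n}\mid do,\mathbf{X}_{:;\mathbf{S}})$ is this joint divided by its integral over $\mathbf{x}_{\neg\mathbf{I};n}$. The factors $\prod_{i\in\mathbf{I}}p(\mathbf{x}_{i;\mathbf{S}}\mid\textbf{pa}^{\mathcal{G}}_{i;\mathbf{S}})$ involve only positions in $\mathbf{S}$, so they are constant in $\mathbf{x}_{\neg\mathbf{I};n}$ and cancel between numerator and denominator. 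For each remaining factor I would use the de~Finetti mixture to establish the splitting $p(\mathbf{x}_{i;\mathbf{S}\cup\{n\}}\mid\textbf{pa}^{\mathcal{G}}_{i;\mathbf{S}\cup\{n\}})=p(x_{i;n}\mid\mathbf{x}_{i;\mathbf{S}},\textbf{pa}^{\mathcal{G}}_{i;\mathbf{S}\cup\{n\}})\,p(\mathbf{x}_{i;\mathbf{S}}\mid\textbf{pa}^{\mathcal{G}}_{i;\mathbf{S}})$; the key point is that $\theta_i$ is drawn independently of all parents (the measures $\nu_i$ enter as a product in Eq.~\eqref{eq:icm_generative_process}), so integrating out $x_{i;n}$ makes the $n$-th mechanism integrate to $1$ and removes any dependence on $\textbf{pa}_{i;n}$, giving $p(\mathbf{x}_{i;\mathbf{S}}\mid\textbf{pa}^{\mathcal{G}}_{i;\mathbf{S}\cup\{n\}})=p(\mathbf{x}_{i;\mathbf{S}}\mid\textbf{pa}^{\mathcal{G}}_{i;\mathbf{S}})$. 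The second pieces again depend only on positions in $\mathbf{S}$ and cancel, leaving the numerator $\prod_{i\notin\mathbf{I}}p(x_{i;n}\mid\mathbf{x}_{i;\mathbf{S}},\textbf{pa}^{\mathcal{G}}_{i;\mathbf{S}\cup\{n\}})$.

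The main obstacle, and the last step, is showing that this product is already normalized in $\mathbf{x}_{\neg\mathbf{I};n}$, i.e. that the denominator equals $1$; only then does the conditional equal the product and match the claimed formula. I would argue this by integrating out the coordinates $\{x_{i;n}:i\notin\mathbf{I}\}$ in reverse topological order of $\mathcal{G}$ restricted to $\neg\mathbf{I}$: the induced subgraph is acyclic, so it has a sink $i^{\star}$ whose value appears in no other surviving factor's parent set (its children, if any, lie in $\mathbf{I}$ and were already cancelled), whence $\int p(x_{i^{\star};n}\mid\cdots)\,dx_{i^{\star};n}=1$ regardless of the fixed parent values; removing that factor and recursing eliminates the whole product. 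Care is needed here precisely because the factors are coupled at position $n$ through the parent structure, and because intervened parents are held fixed at $\hat{\mathbf{x}}$ while non-intervened parents are integrated --- the reverse-topological order is what guarantees each single-coordinate integral sees a genuine conditional density and hence yields $1$. Reinstating the earlier cancellations then yields $P(\mathbf{X}_{\neg\mathbf{I};n}\mid do(\mathbf{X}_{\mathbf{I};n}=\hat{\mathbf{x}}),\mathbf{X}_{:;\mathbf{S}})=\prod_{i\notin\mathbf{I}}P(X_{i;n}\mid\mathbf{X}_{i;\mathbf{S}},\textbf{PA}_{i;\mathbf{S}\cup\{n\}})\big|_{\mathbf{X}_{\mathbf{I};n}=\hat{\mathbf{x}}}$, as required.
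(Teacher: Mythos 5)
Your proposal is correct and follows essentially the same route as the paper's proof: specialize Theorem \ref{theorem:truncated_factorization} to the intervention $\mathbf{X}_{\mathbf{I};n}$, write the conditional as a ratio, cancel the factors indexed by $i\in\mathbf{I}$, and convert each surviving ratio into $P(X_{i;n}\mid\mathbf{X}_{i;\mathbf{S}},\textbf{PA}_{i;\mathbf{S}\cup\{n\}})$ using the fact that $\mathbf{X}_{i;\mathbf{S}}\ind\textbf{PA}_{i;n}\mid\textbf{PA}_{i;\mathbf{S}}$ (which you derive directly from the de~Finetti mixture by integrating out the $n$-th mechanism, rather than quoting the conditional-independence condition of Definition \ref{def:icm_generative_process} as the paper does --- the two are equivalent). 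The one genuine point of divergence is the normalizing denominator: the paper computes $P(\mathbf{X}_{:;\mathbf{S}}\mid do(\mathbf{X}_{\mathbf{I};n}=\hat{\mathbf{x}}))$ by invoking Lemma \ref{lemma:intervention_across_samples} (intervening at position $n$ leaves the marginal over positions $\mathbf{S}$ untouched) together with the Markov factorization, whereas you integrate the numerator over $\mathbf{x}_{\neg\mathbf{I};n}$ and show the product of conditionals at position $n$ normalizes to one by peeling off sinks of $\mathcal{G}$ restricted to $\neg\mathbf{I}$ in reverse topological order. Your version is self-contained and makes the normalization explicit; the paper's is shorter because it reuses Lemma \ref{lemma:intervention_across_samples}. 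Both arguments are sound, and your explicit marginalization of the positions outside $\mathbf{S}\cup\{n\}$ fills in a step the paper passes over silently.
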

A similar argument also applies to the intervention effect conditioned on observations of experimental results performed on 
other tuples of random variables in the sequence. Appendix \ref{sec:cond_interventional_effect_on_other_experiments} discusses in detail. Here we show the implications of conditional interventional distributions via a causal Pólya urn model. \looseness = -1

\textbf{Causal Pólya Urn Model} Imagine an urn with left and right compartments. The experimenter puts $\alpha$ white balls and $\beta$ black balls in each compartment. At each step $n$, one ball is uniformly drawn from the left and one ball is uniformly drawn from the right. The chosen two balls in the order of left and right are then placed in a dark chamber unobserved by the experimenter. A hidden mechanism reads the color of the two balls and outputs $X_n, Y_n$ to the experimenter. The mechanism outputs $X_n = 1$ whenever the $n$-th left ball is black else $X_n = 0$ and $Y_n = 1$ whenever the left and right balls disagree in color. After observing $X_n$ and $Y_n$, the experimenter puts 
the original balls back in the corresponding compartment and add a ball of the same color as $X_n$ in the left and add a ball of the same color as $Z_n:=(1-X_n)*Y_n + (1-Y_n)*X_n$ in the right. 

\textbf{Discussion} For a given $n$, if $X_m = 1, Y_m = 0$ for many $m < n$, then it is more likely $X_n =1, Y_n = 0$ as there are more black balls are added to both left and right urn compartments. Next, we consider intervention: imagine an agent replaces the left ball in the dark chamber to be a white ball for the $n$-th draw, $do(X_n=0)$, then if $X_m =1, Y_m = 0$ for many $m < n$, it is more likely that $Y_n = 1$ as more black balls are placed in the right compartment and $X_n$ is fixed to be a white ball by intervention. This illustrates a practical example of the aforementioned conditional post-interventional distributions. After the intervention, the observer, ignorant of what happens in the dark chamber, proceeds as normal with the replacement process. The intervention thus changes the dynamics of causal Pólya urn model. We consider causal Pólya model interesting as it illustrates through simple game-like settings that observables $(X_n, Y_n)$ satisfying $Y_{[n]} \ind X_{n+1} | X_{[n]}$ are indeed driven by the independent mechanisms hidden from the observers. This is akin to how Nature hides causal mechanisms from observers and scientists can only reason through observables. Appendix \ref{sec:causal_polya_urn_model} provides a detailed analysis of exchangeability of \textit{causal Pólya Urn Model} and how causal de Finetti theorems apply. 

\subsection{Rules of compact representation of causal effect} 
\label{subsec:compact_representation}
From Theorem \ref{theorem:truncated_factorization}, any interventional effect $P(\mathbf{Y} | do(\mathbf{X})) $ can be obtained from marginalization. However, in practice, marginalization on many variables is computationally expensive. Further, Theorem \ref{theorem:truncated_factorization} requires observations and measurements of all joint variables which is resource-intensive in practice. Such problem on \textit{the observability of variables} is more explicitly posed as out-of-variable problem in \cite{guo2023out}. In this section, we present rules that allow simplification of causal expressions. \looseness = -1

Lemma \ref{lemma:intervention_across_samples} in Appendix \ref{sec:proof_of_lemma_intervention_across_samples} shows that the interventional distributions on target variables of interest are unchanged when only intervening on variables in differing positions other than the target variables. 
Lemma \ref{lemma:intervention_within_samples} in Appendix \ref{sec:proof_intervention_within_samples} shows for arbitrary causal queries when acting on a consistent set of variables across positions (i.e., the intervention set $\mathbf{X}$ consists of $\mathbf{X}_{\mathbf{I};n}$, where each position $n$ shares the same set of variable indices $\mathbf{I}$ being intervened on), the post-interventional distribution can be estimated by only observing and measuring $\mathbf{Y}, \mathbf{X}, \textbf{PA}_\mathbf{X}$, where $\textbf{PA}_\mathbf{X}$ denotes the parent set of $\mathbf{X}$. This is consistent with the parent adjustment formula under i.i.d.\ generative processes, with the additional dependence structure across variables in differing positions. See Appendix \ref{sec:proof_of_lemma_intervention_across_samples} and \ref{sec:proof_intervention_within_samples} for detailed statements and proofs. \looseness = -1

\section{Causal Effect in Multi-environment data}
\label{sec:causal_effect_multi_env}
In Section \ref{sec:causal_effect_markovian_model_exchangeable}, we establish the identifiability of causal effects given the graph $\mathcal{G}$ and the distribution $P$ generated by an ICM generative process. Note under i.i.d.\ generative processes, causal effect identification hinges on the knowledge of causal diagram $\mathcal{G}$. Here, we show that generalizing causality to an exchangeable framework does not necessarily mean we have less ability to perform graphical identification and effect estimation. In fact, with an unknown graph, ICM generative processes allow one to identify graph and causal effects simultaneously. In other words, data adhering to the ICM principle and generated under an exchangeable process alone is sufficient for both graphical and effect identification.   

\begin{restatable}[Causal effect identification in ICM generative processes]{theorem}{effectidentificationicm}
\label{thm:causal_effect_identifiability_icm_without_graph}
Denote $\mathbf{Y}, \mathbf{X}$ be two disjoint subsets of observable variables in $\{\mathbf{X}_{:;n}\}_{n \in \mathbb{N}}$. Then $P(\mathbf{Y} | do(\mathbf{X} = \mathbf{x}))$ is identifiable given the distribution $P$ from the class of distributions generated from ICM generative processes. Here identifiability means the causal query can be computed uniquely from $P$.  
\end{restatable}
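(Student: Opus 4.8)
The plan is to reduce the claim to two results already available: the structure identifiability of ICM-exchangeable data from \cite{guo_causal_2023} (their Theorem~5), and the truncated factorization of Theorem~\ref{theorem:truncated_factorization}. Identifiability here means that the query $P(\mathbf{Y}\mid do(\mathbf{X}=\mathbf{x}))$ is a deterministic function of the observational law $P$ alone; concretely, I would phrase it as follows: if two ICM generative processes, with graphs $\mathcal{G}_1,\mathcal{G}_2$ and parameter measures $\{\nu_i^{(1)}\},\{\nu_i^{(2)}\}$ respectively, induce the same observational distribution $P$ over $\{\mathbf{X}_{:;n}\}$, then they must assign the same value to $P(\mathbf{Y}\mid do(\mathbf{X}=\mathbf{x}))$. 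The argument then splits into a graph-recovery step and an effect-computation step.

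First I would recover the graph. By Definition~\ref{def:icm_generative_process}, any $P$ in the class carries the extra conditional-independence relations $X_{i;[n]} \ind \overline{\textbf{ND}}_{i;[n]}, \textbf{ND}_{i;n+1}\mid \textbf{PA}_{i;[n]}$ that are characteristic of ICM-exchangeable, rather than merely i.i.d., data. The causal de Finetti structure-identification result of \cite{guo_causal_2023} shows these relations pin down the DAG $\mathcal{G}$ uniquely from $P$, in contrast to the i.i.d.\ case, where only the Markov equivalence class is recoverable. Hence $\mathcal{G}_1 = \mathcal{G}_2 =: \mathcal{G}$, and $\mathcal{G}$ is a function of $P$ alone.

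Second, given $\mathcal{G}$, I would invoke Theorem~\ref{theorem:truncated_factorization}. Choose $N$ large enough that every position index appearing in $\mathbf{X}$ and $\mathbf{Y}$ lies in $[N]$, and apply Eq.~\eqref{eq:truncated_factorization_exchangeable} to obtain the joint post-interventional density $p(\mathbf{x}_{:;1},\dots,\mathbf{x}_{:;N}\mid do(\mathbf{X}=\mathbf{x}))$. The crucial observation is that the right-hand side of that formula is built entirely from observational conditionals of the form $p(\mathbf{x}_{i;[\,\cdot\,]}\mid \textbf{pa}_{i;[\,\cdot\,]}^{\mathcal{G}})$, which are themselves determined by $P$ and $\mathcal{G}$; the parameter measures $\nu_i$ never appear explicitly. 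Thus both processes, sharing $\mathcal{G}$ and $P$, produce identical joint interventional densities. Marginalizing out every coordinate not in $\mathbf{Y}$, again a deterministic operation on $P$, yields $P(\mathbf{Y}\mid do(\mathbf{X}=\mathbf{x}))$, proving it is computed uniquely from $P$.

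The main obstacle is the graph-recovery step: effect identification given $\mathcal{G}$ is essentially bookkeeping on Theorem~\ref{theorem:truncated_factorization}, but the claim that no graphical assumption is needed rests entirely on the uniqueness of $\mathcal{G}$, i.e.\ on correctly invoking \cite{guo_causal_2023} together with whatever non-degeneracy hypotheses that result requires. A secondary care point is to state the marginalization cleanly when $\mathbf{X}$ and $\mathbf{Y}$ straddle several positions, so that the finite-$N$ truncated factorization genuinely covers the query and the leftover coordinates integrate out consistently across the infinite exchangeable sequence.
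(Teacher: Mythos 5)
Your proposal is correct and follows essentially the same two-step route as the paper's own proof: first recover the unique DAG from $P$ via the structure-identification theorem of \cite{guo_causal_2023}, then apply Theorem~\ref{theorem:truncated_factorization} and marginalize to compute the query from observational conditionals alone. Your explicit flagging of the faithfulness/non-degeneracy hypotheses needed for the graph-recovery step is a fair point of care that the paper's proof handles only implicitly (by defining the model class as distributions Markov \emph{and faithful} to $\mathrm{ICM}(\mathcal{G})$), but it does not change the argument.
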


To see how to apply Theorem \ref{thm:causal_effect_identifiability_icm_without_graph} in practice, we build connection between exchangeable and grouped or multi-environment data and propose the \emph{Do-Finetti} algorithm. In the causal literature, grouped data refers to data available from multiple environments, each producing (conditionally) i.\,i.\,d observations from a different distribution, which are related through some invariant causal structure shared by all environments. We can interpret multi-environment data through the lens of exchangeability as follow: In each environment $e$, we observe exchangeable random variables $\mathbf{X}^e_{:;[N_e]}$, where $X_{d;n}^e$ denotes the $d$-th random variable observed at $n$-th position in environment $e$. $N_e$ denotes the number of observations from environment $e$. Data across environments are independent and identically distributed in the sense that the distribution of $\mathbf{X}^e_{:;[N]}$ and $\mathbf{X}^{e'}_{:;[N]}$ is identical for all $N<\min(N_e, N_{e'})$. Each environment thus provides a finite marginal of an i.\,i.\.d copy of the same exchangeable process.  

The \emph{Do-Finetti} algorithm combines \emph{Causal-de-Finetti} algorithm developed in \cite{guo_causal_2023} and the truncated factorization for ICM generative processes developed in Section \ref{sec:causal_effect_markovian_model_exchangeable} of this paper. It provides a proof-of-concept algorithm to verify that multi-environment data generated under the independent causal mechanism principle alone via an exchangeable process can simultaneously estimate the causal effect and causal graph. See Algorithm \ref{alg:mvcausal_effect_exchangeable} in Appendix \ref{sec:algorithm} for details of the procedure. \looseness=-1

\section{Experiments}
\label{sec:experiments}
We construct synthetic datasets according to causal Pólya urn model (cf. Section \ref{subsec:conditioned_post_interventional_distributions}) and demonstrate that \textit{Do-Finetti} algorithm can estimate causal effects and graphs simultaneously. We compare with the standard method in i.i.d.\ processes (i.e., PC algorithm and truncated factorization Eq.~ \eqref{eq:truncated_factorization_iid}) and show empirically that the traditional truncated factorization cannot estimate causal effects in our setting. We then perform ablation studies to attribute errors to either graphical misclassification or effect estimation.

Let superscript $^e$ denote variables generated in environment $e$. The data-generating process for $X \to Y$, for example, as follows:
\begin{align*}
    &\theta^e \sim \text{Beta}(\alpha, \beta), \psi^e \sim \text{Beta}(\alpha, \beta) \\
    X \to Y: \, &X^e_i := \text{Ber}(\theta^e), Y^e_i := \text{Ber}(\psi^e) \oplus X^e_i 
\end{align*}
where $\oplus$ denotes xor operation and $X_i^e, Y_i^e$ denote variables generated at the $i$-th position in environment $e$. We collect two pairs of random variables across all environments and run \textit{do-Finetti} algorithm to compute the post-interventional distributions with randomly initialized intervened variable and its corresponding values as in Eq.~\eqref{eq:truncated_factorization_exchangeable}. We repeat the experiment for $100$ times and report the mean squared error loss between predicted and analytic solutions across varying number of environments. Fig. \ref{fig:doF-bivariate} shows i.i.d.\ fails to estimate causal effects, giving high estimation error in ICM generative processes. Even with knowledge of the true DAG and infinite data, \textit{i.i.d. with true DAG} never achieves analytic solutions (Fig. \ref{fig:doF-bivariate-mse}). On the other hand, \textit{do-Finetti} achieves near-zero causal effect estimation error, meaning correct DAG identification and correct effect estimation. Appendix \ref{sec:experiment-details} details exact experiment setups. \looseness = -1

\begin{figure}
    \centering
    \subfloat{
        \includegraphics[width=0.45\textwidth]{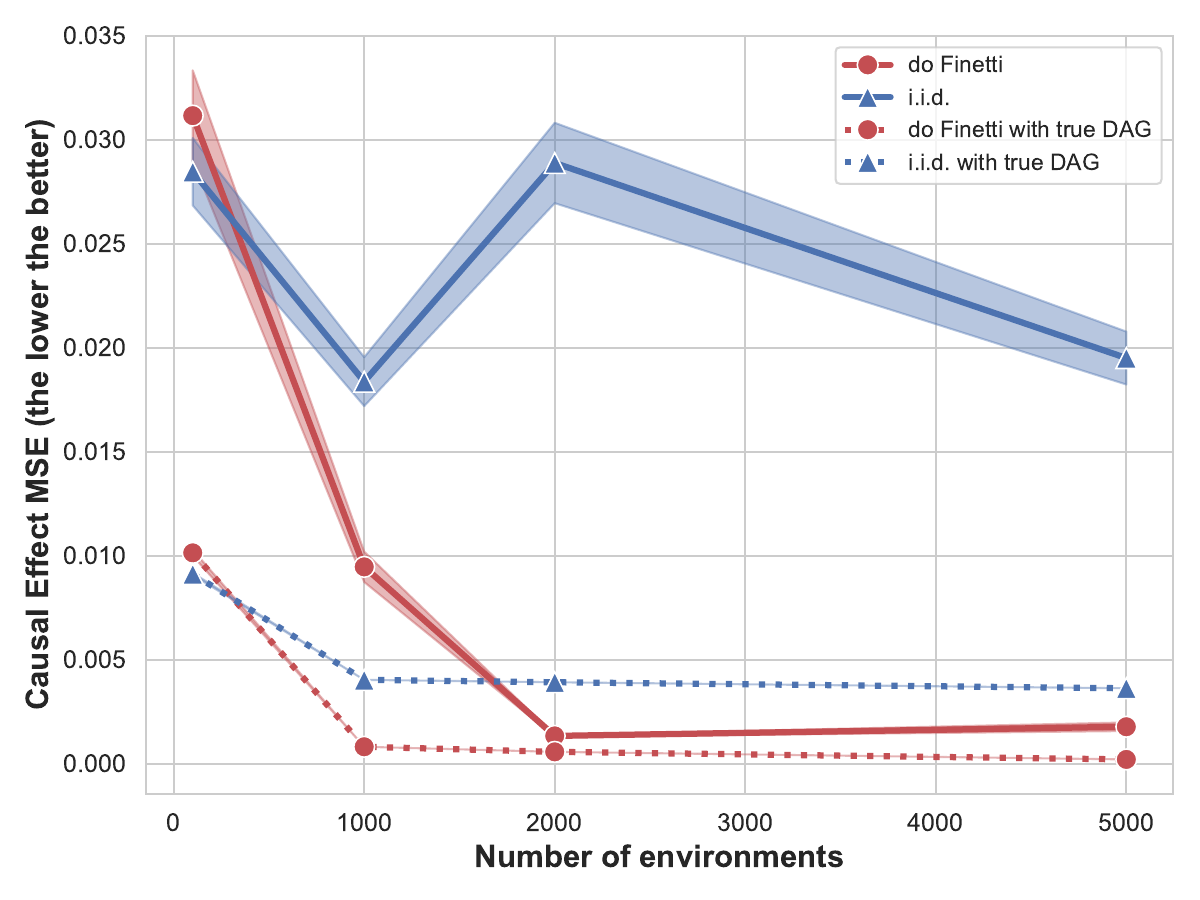}
        \label{fig:doF-bivariate-mse}
    }
    \hfill
    \subfloat{
        \includegraphics[width=0.45\textwidth]{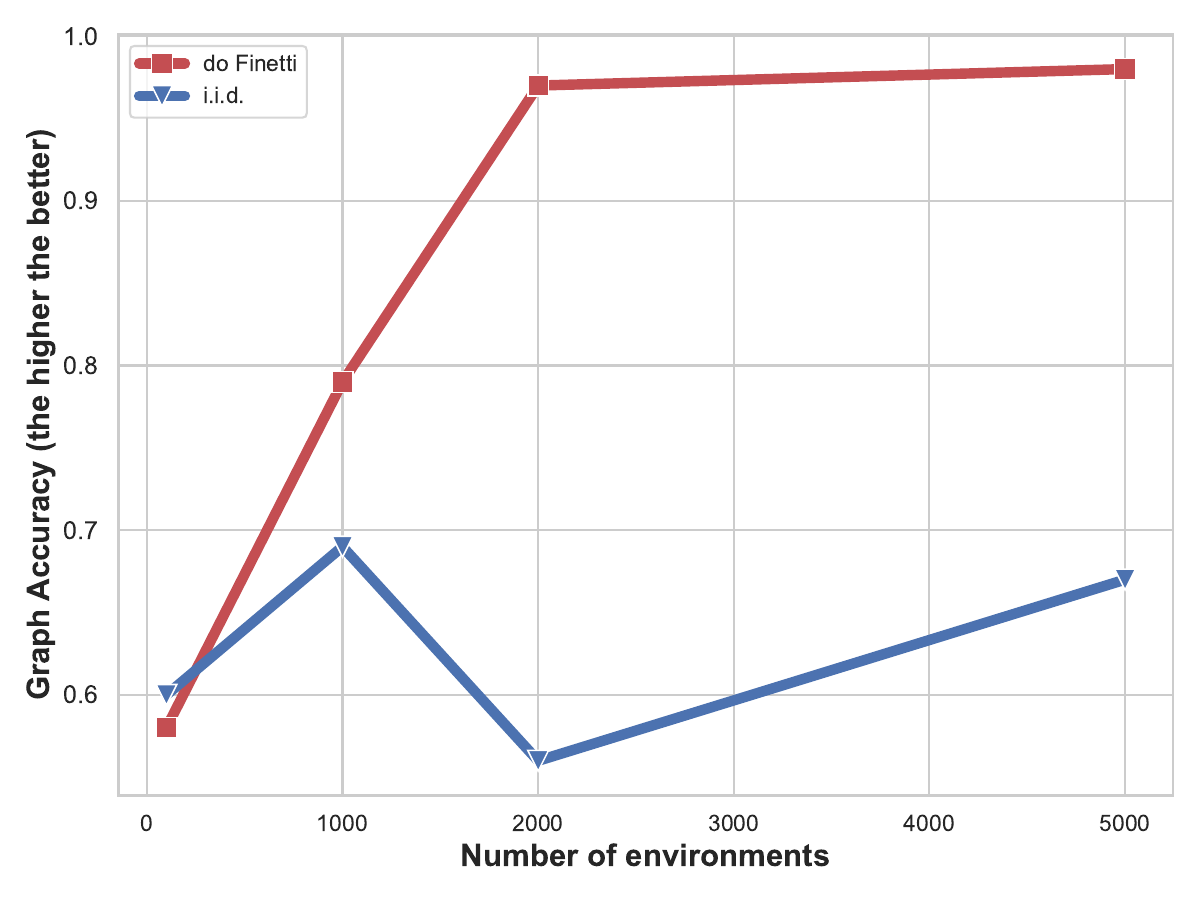}
        \label{fig:doF-bivariate-graph}
    }
    \caption{Our method's (\textit{do-Finetti}) performance in simultaneously identifying DAG (right) and causal effect estimation (left), compared to the i.i.d.\ algorithm (\textit{i.i.d.}) and corresponding methods with known true DAG (\textit{do Finetti with true DAG} and \textit{i.i.d. with true DAG}) in bivariate setting.  Left shown are the mean and standard deviation of MSE compared to analytic solutions for each method aggregated over 100 experiments. Right shows the accuracy of identifying the correct underlying DAG for each method. \textit{Do-Finetti} identifies unique causal structures and achieves near-perfect causal effect estimation. \looseness = -1}
    \label{fig:doF-bivariate}
\end{figure}
\section{Discussion}

\textbf{Causality and exchangeability.} 
Causal effect estimation in randomized controlled trials \citep{rubin2005causal} relies on the exchangeability assumption between the controlled and treatment group. More recent work \citep{dawid2021decision} introduces a decision-theoretic framework for causality and uses pre-treatment and post-treatment exchangeability as foundational assumptions on external data used to solve a decision problem. \cite{guo_causal_2023} characterize independent causal mechanisms in exchangeable data via causal de Finetti theorems. This work extends on causal de Finetti and studies causal effect identification and estimation in ICM generative processes. 

\textbf{Causal effect estimation in multi-environment data.} \cite{bareinboim2016causal} study the transportability of causal effects for populations under different experimental conditions.   \cite{Peters2016CausalIntervals} quantify causal effect estimation and its uncertainties through exploiting invariance of causal mechanisms. \cite{jaber2019causal} develop a complete algorithm for causal effect identification in the Markov equivalence class of causal graphs which was previously deemed impossible to identify further without additional assumptions in the i.i.d.\ framework. In the present work, we show ICM generative processes, naturally arising in multi-environment data, allow simultaneous causal structure recovery and effect identification. Further, we illustrate via a causal Pólya urn model that an exchangeable process, though related, is not limited to only multi-environment data. \looseness =-1

\textbf{Real-world Applications.} Exchangeable data offers an expressive and realistic representation of complex structured relational data, which often appears in clinical studies \citep{bowman1995}, microarray gene expression data \citep{qin_clustering_2006}, or in high-dimensional inference tasks for images \citep{korshunova2018bruno, korshunova_conditional_2020}, 3D point cloud modeling \citep{pmlr-v119-yang20b} to topic modeling \citep{mcauliffe2007supervised}. While it is beyond the scope of the present work, we hope that our work, through building a theoretical framework of interventions in exchangeable data, may ultimately help study and understand naturally occurring mechanisms in scientific domains.  \looseness=-1

\textbf{Conclusion.} We study causal effects and prove a generalized truncated factorization for an important class of exchangeable generative processes. We show for conditional interventional distributions that other observations are relevant to the causal query
for exchangeable but not for i.i.d.\ data. We introduce causal Pólya urn models and demonstrate in-practice interventions in exchangeable data. We develop a \textit{Do-Finetti} algorithm that performs simultaneous DAG and effect estimation from multi-environment data. \looseness = -1 \\
It is exciting to start to understand the complexities of non-i.i.d.\ data, and much is left to do: from developing algorithms that perform at scale to understanding counterfactual queries in exchangeable contexts. Appendix \ref{sec:broader_impacts} details limitations and broader impact. 
Going beyond i.i.d.\ has been a major bottleneck when applying machine learning to real-world applications, and exchangeable data offers a doable and realistic next step. 
\looseness = -1
\clearpage
\bibliographystyle{abbrvnat}
\bibliography{main}

\appendix
\section*{Table of Contents}
\DoToC
\clearpage

\section{Graphical Terminology}
\label{sec:graphical_terminology}

A graph $\mathcal{G}$ consists of vertices $V$ and edges $E$. The set of vertices is often denoted as $\{X_1, \ldots, X_n\}$. We say a pair of vertices is connected with a directed edge if $X_i \to X_j$. The set of edges $E$ contains a set of pairs $\{X_i \to X_j: X_i, X_j \in V\}$. If there does not exist a sequence of edges such that $X_i \rightarrow \dots \rightarrow X_i$ for some $X_i \in V$, then $\mathcal{G}$ is acyclic. Whenever $X_i \to X_j \in E$, then $X_j$ is the child of $X_i$ and that $X_i$ is the parent of $X_j$ in $\mathcal{G}$. One use $\textbf{PA}_i$ to denote the parents of $X_i$ and $\textbf{CH}_i$ to denote its children. A sequence of vertices $X_1, \ldots, X_k$ forms a directed path in $\mathcal{G}$ if, for every $i = 1, \ldots, k$, we have $X_i \to X_{i+1}$. One say that $X$ is an ancestor of $Y$ in $\mathcal{G}$, and that $Y$ is a descendant of $X$, if there exists a directed path $X_1, \ldots, X_k$ with $X_1=X$ and $X_k = Y$. We write $\textbf{DE}_i$ as descendants of $X_i$ and $\textbf{ND}_i$ as non-descendants of $X_i$ and $\overline{\textbf{ND}}_i$ as non-decendants of $X_i$ excluding its own parents.

Both graph and probability distribution encode conditional independence relationships. Given a probability distribution $P$, we say $X \ind Y | W$, if $P(X | Y, W) = P(X | W)$. In graphical terms, we detect the set of conditional independence in a DAG through d-separation. 

\begin{definition}[d-separation]
A path $p$ is d-separated by a block of node $Z$ if and only if one of the two conditions holds: 1) p contains a chain $X_i \rightarrow X_m \rightarrow X_j$ or a fork $X_i \leftarrow X_m \rightarrow X_j$ s.t. $X_m \in Z$; 2) p contains $X_i \rightarrow X_m \leftarrow X_j$ s.t. the middle node $X_m \not \in Z$ and $\textbf{DE}_m \not \in Z$. We then say $Z$ d-separates $X$ and $Y$ if it blocks every path from a node in $X$ to a node in $Y$. 
\end{definition}
\begin{definition}[I-map]
    Given a probability distribution $P$, $\mathcal{I}(P)$ denotes the set of conditional independence relationships of the form $X \ind Y \mid Z$ that hold in $P$. Given a DAG $\mathcal{G}$, $\mathcal{I}(\mathcal{G})$ denotes the set of conditional independence assumptions read-off via d-separation. 
\end{definition}
\begin{definition}[Markovian and Faithful]
    A distribution $P$ is Markov to a DAG $G$ if $\mathcal{I}(G) \subseteq \mathcal{I}(P)$. A distribution $P$ is faithful to $G$ if $\mathcal{I}(P) \subseteq \mathcal{I}(G)$. 
\end{definition}

Next we proceed to define what is $\text{ICM}(\mathcal{G})$, through explicating the ICM operator. 
\begin{definition}[ICM operator on a DAG $\mathcal{G}$]
\label{def:icm-operator}
   Let $U$ be the space of DAGs with vertices $X_1, \ldots, X_d$. Let $V$ be the space of acyclic directed mixed graphs (ADMG \citep{Richardson2003}) with vertices $\{(X_{i;n})\}$, where $i \in [d], n \in \mathbb{N}$. A mapping $F$ from $U$ to $V$ is an ICM operator if $F(\mathcal{G})$ satisfies 1) $F(\mathcal{G})$ restricted to the subset of vertices $\{X_{1; n}, \ldots, X_{d; n}\}$ is a DAG $\mathcal{G}$, for any $n \in \mathbb{N}$; 2) $X_{i;n} \leftrightarrow X_{i;m}$ whenever $n \neq m$ for all $i \in [d]$; 3) there are no other edges other than those stated above. We denote the resulting ADMG as $\text{ICM}(\mathcal{G})$ and $\textbf{PA}_{i;n}^\mathcal{G}$ denotes the parents of $X_{i;n}$ and similarly $\textbf{ND}_{i;n}^\mathcal{G}$ for the corresponding non-descendants.
\end{definition}

\section{Proof of Corollary \ref{corollary:identical_marginal_postintervention}}
\label{sec:proof_of_corollary_idential_marginal}
\begin{definition}[Exchangeable arrays] An array of size $d$ contains variables $(X_{1;n}, \ldots, X_{d;n})$ where $X_{d;n}$ denotes the $d$-th random variable in $n$-th array. Such an array is denoted as $\mathbf{X}_{:;n}$. A finite sequence of size $d$ arrays is \textbf{exchangeable}, if for any permutation map $\pi$, 
\begin{equation}
    P(\mathbf{X}_{:;\pi(1)}, \ldots, \mathbf{X}_{:;\pi(N)}) = P(\mathbf{X}_{:; 1}, \ldots, \mathbf{X}_{:;N})
\end{equation}
\end{definition}

Next we show exchangeability implies identical marginal distributions. This is a standard result in exchangeable sequences \citep{deFinetti1931}, here we include it for completeness. 

\begin{theorem}[Exchangeable implies identical marginal distribution]
\label{thm:exchangeable_implies_identical_marginal} Let $P$ be the distribution for some exchangeable processes, where $\mathbf{X}_{:;1}, \ldots, \mathbf{X}_{:;N}$ are exchangeable arrays. Then for any $n \neq m$:
\begin{align}
    P(\mathbf{X}_{:;n}) = P(\mathbf{X}_{:;m})
\end{align}
\end{theorem}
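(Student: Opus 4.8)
The plan is to reduce the statement to a single invocation of the exchangeability definition followed by marginalization, taking the permutation to be the transposition of positions $n$ and $m$. First I would express the single-array marginal as an integral of the full joint density over every array except the one occupying slot $n$:
\begin{align*}
    p(\mathbf{x}_{:;n}) = \int p(\mathbf{x}_{:;1}, \ldots, \mathbf{x}_{:;N}) \prod_{k \neq n} d\mathbf{x}_{:;k}.
\end{align*}

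Next I would let $\pi$ be the transposition swapping $n$ and $m$ and fixing every other index. By the definition of exchangeable arrays, the joint density is invariant under $\pi$, namely
\begin{align*}
    p(\mathbf{x}_{:;1}, \ldots, \mathbf{x}_{:;N}) = p(\mathbf{x}_{:;\pi(1)}, \ldots, \mathbf{x}_{:;\pi(N)}).
\end{align*}
I would then integrate both sides of this identity over $\mathbf{x}_{:;k}$ for every $k \neq n$, holding $\mathbf{x}_{:;n}$ fixed. The left side collapses to $p(\mathbf{x}_{:;n})$, the position-$n$ marginal. On the right side, the fixed symbol $\mathbf{x}_{:;n}$ occupies the argument slot $k$ with $\pi(k) = n$; since $\pi$ is its own inverse and $\pi(m) = n$, this is slot $m$. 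Every other slot is an integration variable, so after relabeling dummies the right side is exactly the position-$m$ marginal density evaluated at the argument $\mathbf{x}_{:;n}$. Equating the two gives agreement of the two marginal densities at every value, i.e., $P(\mathbf{X}_{:;n}) = P(\mathbf{X}_{:;m})$.

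The argument is entirely routine and I do not expect a genuine obstacle; the only point requiring care is the bookkeeping of the change of dummy integration variables on the right-hand side, where one must confirm that integrating out all slots except the slot into which $\pi$ sends $n$ produces the position-$m$ marginal rather than some other quantity. A discrete (summation) version of the identical computation goes through unchanged if one prefers not to assume densities, and the same scheme generalizes immediately to any pairwise or higher-order marginal by choosing the corresponding permutation.
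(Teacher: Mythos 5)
Your proposal is correct and follows essentially the same route as the paper: apply exchangeability with the transposition swapping positions $n$ and $m$, then marginalize out all arrays except the one held fixed, with the only delicate point being the bookkeeping that the fixed argument lands in slot $m$ on the permuted side. No substantive difference from the paper's argument.
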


\begin{proof}[Proof of Theorem \ref{thm:exchangeable_implies_identical_marginal}]
Without loss of generality, assume $n < m$. Let $\pi: [N] \to [N]$ be the permutation mapping such that $\pi(i) = i, \forall i \neq \{n, m\}$ and $\pi(n) = m, \pi(m) = n$. 

Then, by definition of exchangeability, 
\begin{align}
&P(\mathbf{X}_{:;1} = \mathbf{x}_{:;1}, \ldots, \mathbf{X}_{:;n} = \mathbf{x}_{:;n}, \ldots, \mathbf{X}_{:;m} = \mathbf{x}_{:;m}, \ldots, \mathbf{X}_{:;N} = \mathbf{x}_{:;N}) \\
&= P(\mathbf{X}_{:;1} = \mathbf{x}_{:;1}, \ldots, \mathbf{X}_{:;n} = \mathbf{x}_{:;m}, \ldots, \mathbf{X}_{:;m} = \mathbf{x}_{:;n}, \ldots, \mathbf{X}_{:;N} = \mathbf{x}_{:;N})
\end{align}

Integrating out every values $\mathbf{x}_{:;m}, m \neq n$, we have
$P(\mathbf{X}_{:;m} = \mathbf{x}_{:;n}) = P(\mathbf{X}_{:;n} = \mathbf{x}_{:;n})$ for all values $\mathbf{x}_{:;n}$. Therefore, $\mathbf{X}_{:;m}$ and $\mathbf{X}_{:;n}$ shares identical marginal distribution.
\end{proof}

\identicalmarginalpostintervention*

\begin{proof}[Proof of Corollary \ref{corollary:identical_marginal_postintervention}]
    By Theorem \ref{thm:exchangeable_implies_identical_marginal}, $P(\mathbf{X}_{:;n}) = P(\mathbf{X}_{:;m}), \forall n\neq m$. One can thus drop the notation of position indices when considering $\mathbf{X}_{:;n}$ for some $n$ as $\mathbf{X}_{:;n} \sim P$ from identical $P$. Due to truncated factorization in i.i.d. case (Eq. \ref{eq:truncated_factorization_iid}), any post-interventional distributions can be represented as pre-interventional distributions. Thus post-interventional distributions in $P(\mathbf{X}_{:;n} = \mathbf{x}_{:;n} | do(\mathbf{X}_{I;n} = \textbf{x}))$ is uniquely determined by $P$ and so is $P(\mathbf{X}_{:;m} = \mathbf{x}_{:;n} | do(\mathbf{X}_{I;m}  = \textbf{x}))$. Since $P$ is identical, then so is the post-interventional distributions. Marginalizing out irrelevant variable indices that are not contained in $\mathbf{I} \cup \mathbf{J}$ gives the desired result. 
\end{proof}

\section{Proof of Theorem \ref{theorem:truncated_factorization}}
\label{sec:proof_of_truncated_factorization_exchangeable}
\textbf{Notation} Let $X_{i;n}$ denote the random variable corresponding to $i$-th variable index and $n$-th position index for data generated under ICM generative processes. Write $\mathbf{X}_{:;n}:= (X_{1;n}, \ldots, X_{d;n})$ and  $\mathbf{X}_{d;[N]} := (X_{d;1}, \dots, X_{d;N})$. Define $\textbf{UP}_{i;:} := (\mathbf{X}_{i+1;:}, \ldots, \mathbf{X}_{d;:})$, which contains all random variables that have higher variable index value than $i$, i.e. upstream of node $i$.

\truncatedfactorization*
\begin{proof}[Proof of Theorem \ref{theorem:truncated_factorization}]

Without loss of generality, we reorder the variables according to reversed topological ordering, i.e., a node's parents will be placed after this node. Note a reversed topological ordering is not unique, but it must satisfy a node's descendants will come before itself.

By chain rule, $P(\mathbf{X}_{:;1}, \ldots, \mathbf{X}_{:;N}) = \prod_i P(\mathbf{X}_{i;[N]} | \mathbf{UP}_{i;[N]})$. By Theorem 7 (Causal conditional de Finetti) in \cite{guo_causal_2023}, since $P$ is generated under some ICM generative process, there exists a latent variable $\boldsymbol{\theta}_i$ with suitable probability measure $\nu$ such that \begin{align}
        P(\mathbf{x}_{i;[N]}|\textbf{up}_{i;[N]}) &= \int \prod_{n=1}^N p(x_{i;n} | \textbf{pa}^\mathcal{G}_{i;n}, \boldsymbol{\theta}_i) d\nu(\boldsymbol{\theta}_i) \\
        &= \int \prod_{n=1}^N p(x_{i;n} | \textbf{pa}^\mathcal{G}_{i;n}, \boldsymbol{\theta}_i) p(\boldsymbol{\theta}_i) d\boldsymbol{\theta}_i
        \label{eq:causal_conditional_definetti}
    \end{align}
In this work, assume for every probability measure $\nu$ there exists a corresponding probability density associated with it and thus the second equality follows. 

Note by the second condition in causal de Finetti theorem - multivariate version in \cite{guo_causal_2023}, we have 
 $$X_{i;[n]} \ind \overline{\textbf{ND}}_{i;[n]} | \textbf{PA}_{i;[n]}$$
 where $\textbf{PA}_i$ selects parents of node $i$ and $\textbf{ND}_i$ selects non-descendants of node $i$ in $\mathcal{G}$. $\overline{\textbf{ND}}_i$ denotes the set of non-descendants of node $i$ excluding its own parents. Then the naive application of the conditional independence on $P(\mathbf{X}_{i;[N]} | \mathbf{UP}_{i;[N]})$ gives:
 $$
 P(\mathbf{X}_{i;[N]} | \mathbf{UP}_{i;[N]}) = P(\mathbf{X}_{i;[N]} | \mathbf{PA}_{i;[N]})
 $$
Combining gives:
    \begin{align}
        P(\mathbf{x}_{i;[N]}|\textbf{up}_{i;[N]}) = P(\mathbf{x}_{i;[N]} | \textbf{pa}^\mathcal{G}_{i;[N]}) = \int \prod_{n=1}^N p(x_{i;n} | \textbf{pa}^\mathcal{G}_{i;n}, \boldsymbol{\theta}_i) p(\boldsymbol{\theta}_i)d\boldsymbol{\theta}_i 
        \label{eq:causal_conditional_definetti_faithfulness}
    \end{align}
    Applying Eq. \ref{eq:causal_conditional_definetti_faithfulness} on the chain rule decomposition of the joint distribution, 
    \begin{align}
        P(\mathbf{X}_{:;1}, \ldots, \mathbf{X}_{:;N}) = \prod_i P(\mathbf{X}_{i;[N]} | \textbf{PA}_{i;[N]}^\mathcal{G})
    \label{eq:markov_decomposition_exchangeable}
    \end{align}
Inheriting Definition \ref{def:causal_effect_in_exchangeable}:

    When $i \not \in I_\mathbf{X}$, intervention on $\mathbf{X}$ does not change the probability distribution but just enforces consistency $\textbf{PA}_{i;n}^\mathcal{G}$ with values in $\hat{\mathbf{x}}$, i.e., 
    \begin{align}
        P(\mathbf{X}_{i;[N]} | \textbf{PA}_{i;[N]}^\mathcal{G}, do(\mathbf{X}=\hat{\mathbf{x}})) = P(\mathbf{X}_{i;[N]} | \textbf{PA}_{i;[N]}^\mathcal{G}) \big |_{\mathbf{X}=\hat{\mathbf{x}}}, \quad \forall i \not \in I_\mathbf{X}
        \label{eq:markov_with_non_intervened_nodes}
    \end{align}

    When $i \in I_\mathbf{X}$, intervention on $\mathbf{X}$ affect variables only contained in the subset of  $\mathbf{X}_{i; \mathbf{N}_i}$. Let $X_{i;n} \in \mathbf{X}_{i; \mathbf{N}_i}$. Then $p(x_{i;n} | \textbf{pa}_{i;n}^\mathcal{G}, \boldsymbol{\theta}_i) = \delta(X_{i;n} = \hat{x}_{i;n})$. Aggregating together,  
    \begin{align}
        P(\mathbf{x}_{i;[N]} | \textbf{pa}_{i;[N]}^\mathcal{G}, do(\mathbf{X}=\hat{\mathbf{x}})) &= \int \prod_{n \in \mathbf{N}_i} \delta(X_{i;n} =\hat{x}_{i;n}) \prod_{n \in [\neg \mathbf{N}_i]} p(x_{i;n}|\textbf{pa}_{i;n}^\mathcal{G}, \boldsymbol{\theta}_i) p(\boldsymbol{\theta}_i) d\boldsymbol{\theta}_i \\
        &= \begin{cases}
P(\mathbf{x}_{i;[\neg \mathbf{N}_i]} | \textbf{pa}_{i;[\neg \mathbf{N}_i]}^\mathcal{G}) & \text{when } \textbf{pa}_{i;n} \text{ consistent with } \mathbf{X}, \forall n \in \mathbf{N}_i \\
0 & \text{else}
        \end{cases}\\
        &= P(\mathbf{x}_{i;[\neg N_i]} | \textbf{pa}_{i;[\neg N_i]}^\mathcal{G}) \big |_{\mathbf{X}=\hat{\mathbf{x}}}, \quad \forall i \in I_\mathbf{X}
        \label{eq:markov_with_intervened_nodes}
    \end{align}
    Combining Eq. \ref{eq:markov_with_non_intervened_nodes} and \ref{eq:markov_with_intervened_nodes}, 
    \begin{align}
        P(\mathbf{X}_{:;1}, \ldots, \mathbf{X}_{:;N} | do(\mathbf{X} = \hat{\mathbf{x}})) = \prod_{i \in I_\mathbf{X}} P(\mathbf{X}_{i;[\neg \mathbf{N}_i]} | \textbf{PA}_{i;[\neg \mathbf{N}_i]}^\mathcal{G}) \prod_{i \not \in I_\mathbf{X}} P(\mathbf{X}_{i;[N]} | \textbf{PA}_{i;[N]}^\mathcal{G}) \big|_{\mathbf{X}=\hat{\mathbf{x}}}
    \end{align}

\end{proof}

\section{Proof of Lemma \ref{lemma:conditional_post_interventional_distribution}}
\label{sec:proof_of_conditional_post_interventions}
\conditionalpostintervention*

\begin{proof}[Proof of Lemma \ref{lemma:conditional_post_interventional_distribution}]
By Theorem \ref{theorem:truncated_factorization}, 
\begin{align}
    P(\mathbf{X}_{:;\mathbf{S} \cup \{n\}} | do(\mathbf{X}_{I;n} = \hat{\mathbf{x}})) = \prod_{i \in I} P(\mathbf{X}_{i;\mathbf{S}} | \textbf{PA}_{i;\mathbf{S}}^\mathcal{G}) \prod_{i \not \in I} P(\mathbf{X}_{i;\mathbf{S} \cup \{n\}} | \textbf{PA}_{i;\mathbf{S} \cup \{n\}}^\mathcal{G}) \big|_{\mathbf{X}_{I;n}=\hat{\mathbf{x}}}
\end{align}
By Lemma \ref{lemma:intervention_across_samples}, since $n \not \in \mathbf{S}$, 
\begin{align}
    P(\mathbf{X}_{:;\mathbf{S}} | do(\mathbf{X}_{I;n} = \mathbf{x})) = P(\mathbf{X}_{:;\mathbf{S}}) = \prod_{i \in I} P(\mathbf{X}_{i;\mathbf{S}} | \textbf{PA}_{i;\mathbf{S}}^\mathcal{G})\prod_{i \not \in I} P(\mathbf{X}_{i;\mathbf{S}} | \textbf{PA}_{i;\mathbf{S}}^\mathcal{G}) 
\end{align}

By do-operator can be used as normal conditional probability, 
\begin{align}
    P(\mathbf{X}_{\neg \mathbf{I};n} | do(\mathbf{X}_{\mathbf{I};n} = \hat{\mathbf{x}}), \mathbf{X}_{:;\mathbf{S}}) &= \frac{P(\mathbf{X}_{:;\mathbf{S} \cup \{n\}} | do(\mathbf{X}_{I;n} = \hat{\mathbf{x}}))}{P(\mathbf{X}_{:;\mathbf{S}} | do(\mathbf{X}_{I;n} = \hat{\mathbf{x}}))} \\
    &= \frac{\prod_{i \in I} P(\mathbf{X}_{i;\mathbf{S}} | \textbf{PA}_{i;\mathbf{S}}^\mathcal{G}) \prod_{i \not \in I} P(\mathbf{X}_{i;\mathbf{S} \cup \{n\}} | \textbf{PA}_{i;\mathbf{S} \cup \{n\}}^\mathcal{G}) \big|_{\mathbf{X}_{I;n}=\hat{\mathbf{x}}}}{\prod_{i \in I} P(\mathbf{X}_{i;\mathbf{S}} | \textbf{PA}_{i;\mathbf{S}}^\mathcal{G})\prod_{i \not \in I} P(\mathbf{X}_{i;\mathbf{S}} | \textbf{PA}_{i;\mathbf{S}}^\mathcal{G})} \\
    &= \prod_{i \not \in I} \frac{ P(\mathbf{X}_{i;\mathbf{S} \cup \{n\}} | \textbf{PA}_{i;\mathbf{S} \cup \{n\}}^\mathcal{G}) \big|_{\mathbf{X}_{I;n}=\hat{\mathbf{x}}}}{P(\mathbf{X}_{i;\mathbf{S}} | \textbf{PA}_{i;\mathbf{S}}^\mathcal{G})}
\end{align}

By $P$ is the distribution for some ICM generative process, it satisfies the conditions of causal de Finetti theorem (multivariate) version: 
$$\mathbf{X}_{i;[n]} \ind \textbf{ND}_{i;n+1}^\mathcal{G} | \textbf{PA}_{i;[n]}^\mathcal{G}$$ where $\textbf{PA}_i$ selects parents of node $i$ and $\textbf{ND}_i$ selects non-descendants of node $i$ in $\mathcal{G}$.
By exchangeability, one can use arbitrary set other than $[n]$, here use $\mathbf{S}$. Applying above conditional independence, we have 
$$
\mathbf{X}_{i;\mathbf{S}} \ind \textbf{PA}_{i;n}^\mathcal{G} | \textbf{PA}_{i;\mathbf{S}}^\mathcal{G}
$$

as $\textbf{PA}_{i;n}^\mathcal{G} \subseteq \textbf{ND}_{i;n}^\mathcal{G} $. Therefore, 
\begin{align}
\label{eq:icmfaithfulfactorization}
P(\mathbf{X}_{i;\mathbf{S}} | \textbf{PA}_{i;\mathbf{S}}^\mathcal{G}) = P(\mathbf{X}_{i;\mathbf{S}} | \textbf{PA}_{i;\mathbf{S}\cup \{n\}}^\mathcal{G}).
\end{align}
With properties of conditional probability, 
\begin{align}
    \frac{ P(\mathbf{X}_{i;\mathbf{S} \cup \{n\}} | \textbf{PA}_{i;\mathbf{S} \cup \{n\}}^\mathcal{G}) \big|_{\mathbf{X}_{I;n}=\hat{\mathbf{x}}}}{P(\mathbf{X}_{i;\mathbf{S}} | \textbf{PA}_{i;\mathbf{S}}^\mathcal{G})} = P(X_{i;n} | \mathbf{X}_{i;\mathbf{S}}, \textbf{PA}_{i;\mathbf{S}\cup \{n\}}^\mathcal{G}) \big|_{\mathbf{X}_{I;n}=\hat{\mathbf{x}}}
\end{align}
The result follows.
\end{proof}

\section{Conditioned interventional effect on other experiments}
\label{sec:cond_interventional_effect_on_other_experiments}

Under i.i.d. generative processes, if different interventions are placed on the system, then one often considers that data is coming from different environments and observing one experiment result cannot provide extra information about other experiments performed on random variables in different positions of the sequence. For example, consider a sequence of random variables generated by i.i.d., i.e., $(X_1, Y_1, Z_1), (X_2, Y_2, Z_2), \dots$ such that $X_n \to Y_n \to Z_n$ for all $n$. Then, 
\begin{align}
    \textbf{i.i.d. generative processes}: &P(Y_1 | do(X_1=x), do(Y_2 = y), Z_2) = P(Y_1 | do(X_1=x))
\end{align}
due to $(X_1, Y_1, Z_1) \ind (X_2, Y_2, Z_2)$. Therefore interventions on inconsistent intervention targets across the i.i.d. generated sequence can be equivalently considered as a mixture of data sampled from different environments, i.e., identical copies of SCMs with different hard-interventions. 

However, this is not the case in ICM generative processes. Continuing the previous example, and let the sequence be data generated from an ICM generative process. Then, 
\begin{align}
\nonumber \textbf{ICM generative processes}: &P(Y_1 = y_1 | do(X_1=x), do(Y_2 = y), Z_2 = z_2) \\ 
&= \int p(y_1| x, \theta_Z) p(\theta_Z | z_2, y) d\theta_Z
\label{eq:intervention_effect_conditioned_on_experiments}
\end{align}
Eq. \ref{eq:intervention_effect_conditioned_on_experiments} explicitly shows that knowing the intervention effect on $Y_2$ acting on $Z_2$ helps one to infer the causal de Finetti parameter $\theta_Y$ and thus the intervention effect of $X_1$ on $Y_1$. 

\subsection{Derivation of Eq. \ref{eq:intervention_effect_conditioned_on_experiments}}
Consider a DAG $\mathcal{G}:= X \to Y \to Z$ and a distribution $P$ generated by the ICM generative process corresponding to $ICM(\mathcal{G})$. By truncated factorization in Theorem \ref{theorem:truncated_factorization}, we have:
\begin{align}
    &P(Z_1=z_1, Y_1=y_1, X_1=x, Z_2=z_2, Y_2=y, X_2 = x_2 | do(X_1 = \hat{x}), do(Y_2 = \hat{y})) \\
    &= \underbrace{\int \delta(X_1 = \hat{x}) p(x_2 | \theta_X) p(\theta_X) d\theta_X}_{:= p(x_2)} \underbrace{\int p(y_1 | \hat{x}, \theta_Y) \delta(Y_2 = \hat{y}) p(\theta_Y) d\theta_Y}_{:=p(y_1|\hat{x})} \int p(z_2 | \hat{y}, \theta_Z) p(z_1 | y_1, \theta_Z) d\theta_Z \\
    &= p(x_2) \int p(z_2 | \hat{y}, \theta_Z) p(z_1 | y_1, \theta_Z) p(y_1 | \hat{x}) p(\theta_Z)d\theta_Z \\ 
    &= p(x_2) \int p(z_2, \theta_Z | \hat{y}) p(z_1, y_1 | \hat{x}, \theta_Z) d\theta_Z 
\end{align}
The merge of $p(z_2 | \hat{y}, \theta_Z)p(\theta_Z) = p(z_2, \theta_Z | \hat{y})$ is due to $\theta_Z$ is independent of $Y_n$ for any $n$ as $Z_n$ is always a collider in the path. The merge of $p(z_1 | y_1, \theta_Z) p(y_1 | \hat{x}) = p(z_1, y_1 | \hat{x}, \theta_Z)$ is due to $Z_1 \ind X_1 | \theta_Z, Y_1$. Marginalizing away $X_2$, $Z_1$ and note $P(Z_2 = z_2 | do(X_1 =x), do(Y_2 = y)) = P(Z_2 = z_2 | do(X_1 = x))$ by Lemma \ref{lemma:intervention_across_samples}. We have the desired result. 

\section{Causal Pólya Urn Model}
\label{sec:causal_polya_urn_model}
\subsection{Exchangeability of causal Pólya urn model}
We first want to show exchangeability of the pair of the random variables $(X_i, Y_i)$ in the sequence $X_1, Y_1, X_2, Y_2, \ldots, X_n, Y_n$ that are generated via a causal Pólya urn model. Assume $n$ pairs of $(X_i, Y_i)$ are observed: out of $n$ balls we observe $n_1$ times that $X_n = 1, Y_n = 1$, $n_2$ times that $X_n = 1, Y_n = 0$, $n_3$ times that $X_n = 0, Y_n = 1$ and $n_4$ times that $X_n=0, Y_n = 0$ where $n_1 + n_2 + n_3 + n_4 = n$. 
Note for the first time the number of balls in the urn for left and right compartment is $\alpha + \beta$ each, for the second draw $\alpha + \beta + 1$ each, and for the $k$-th draw $\alpha + \beta + k - 1$ each. The probability that we draw $(X_n = 1, Y_n =1)$ pairs first, followed by $(X_n=1, Y_n=0)$ second, and then $(X_n=0, Y_n=1)$ third, followed by $(X_n=0, Y_n = 0)$ last is:
\begin{align}
    &P(X_1 = 1, Y_1 = 1, \ldots, X_n = 0, Y_n = 0) \\
    &= (\frac{\alpha}{\alpha + \beta} \times \frac{\beta}{\alpha + \beta}) \times (\frac{\alpha + 1}{\alpha + \beta + 1} \times \frac{\beta + 1}{\alpha + \beta + 1}) \times \dots \times (\frac{\alpha + n_1 -1}{\alpha + \beta + n_1 - 1} \times \frac{\beta + n_1 - 1}{\alpha + \beta + n_1 - 1}) \\
    & \times (\frac{\alpha + n_1}{\alpha + \beta + n_1 } \times \frac{\alpha}{\alpha + \beta + n_1 }) \times \dots \times (\frac{\alpha + n_1 + n_2 -1}{\alpha + \beta + n_1 + n_2 -1 } \times \frac{\alpha + n_2 -1}{\alpha + \beta + n_1 + n_2 -1}) \\
    & \times  (\frac{\beta}{\alpha + \beta + n_1 + n_2} \times \frac{\alpha + n_2}{\alpha + \beta + n_1 + n_2}) \times \dots \times (\frac{\beta + n_3 -1}{\alpha + \beta + n_1 + n_2 + n_3 -1} \times \frac{\alpha + n_2 + n_3 -1}{\alpha + \beta + n_1 + n_2 + n_3 -1}) \\
    & \times (\frac{\beta + n_3 }{\alpha + \beta + n_1 + n_2 + n_3 } \times \frac{\beta + n_1}{\alpha + \beta + n_1 + n_2 + n_3 }) \times \dots \times (\frac{\beta + n_3 + n_4 - 1}{\alpha + \beta + n -1} \times \frac{\beta + n_1 + n_4 -1}{\alpha + \beta + n -1})
\end{align}

For any permutation of the order $\pi: [n] \to [n]$, the denominator will not change as each step, we put an additional ball in each of the left and right compartments of the urn regardless of the color of the ball observed. 

If we observe $j$-th black ball, i.e., $X_n = 1$ in the left compartment and $l$-th $Y_n | X_n $ in the right compartment such that $Z_n = 1$ at step $m$, then the numerator in the probability is $(\alpha + j - 1) \times (\alpha + l -1)$ with the denominator as $(\alpha + \beta + m -1) \times (\alpha + \beta + m -1)$.  Therefore, for any sequence $x_1, y_1, \ldots, x_n, y_n$ if $X_i = 1$ occurs $m_1$ times and $Z_i = 1 = (1- X_i)*Y_i + (1- Y_i) * X_i$ occurs $m_2$ times, then the final probability will always be equal to 
\begin{align}
&P(X_1 = x_1, Y_1 = y_1, \ldots, X_n = x_n, Y_n = y_n) \\ 
&= \frac{\prod_{i=1}^{m_1} (\alpha + i - 1) \prod_{i=1}^{n-m_1} (\beta + i - 1) \prod_{j=1}^{m_2}(\alpha + j - 1) \prod_{j=1}^{n-m_2} (\beta + j - 1)}{\prod_{k=1}^n (\alpha + \beta + k - 1) \times (\alpha + \beta + k -1) }\\
&= \frac{(\alpha + m_1 - 1)!(\beta + n - m_1 -1)!(\alpha + \beta - 1)! }{(\alpha - 1)!(\beta - 1)!(\alpha + \beta + n - 1)!} \frac{(\alpha + m_2 - 1)!(\beta + n - m_2 - 1)!(\alpha + \beta - 1)!}{(\alpha - 1)!(\beta - 1)!(\alpha + \beta + n - 1)!}
\end{align}

\subsection{Application of causal de Finetti theorems}
According to causal de Finetti theorems, there must exist two unique independent prior functions that allow the joint distribution to be factored into two independent products. Here we provide an explicit representation. Let $B(\theta; \alpha, \beta)$ denotes beta distribution with parameter $\alpha, \beta$ and $B(\theta; \alpha, \beta) = \frac{(\alpha + \beta - 1)!}{(\alpha - 1)!(\beta - 1)!} \theta^{\alpha - 1} (1- \theta)^{\beta- 1}$

\begin{align}
    &P(X_1 = x_1, Y_1 = y_1, \ldots, X_n = x_n, Y_n = y_n) \\
    &= \int \theta^{\sum_n x_i} \times (1- \theta)^{(n - \sum_n x_i)}B(\theta;\alpha, \beta) d\theta \times  \\
    &\int \psi^{\sum_n [(1-x_n)*y_n + (1-y_n)*x_n]} (1-\psi)^{n - \sum_n [(1-x_n)*y_n + (1-y_n)*x_n]} B(\psi; \alpha, \beta) d\psi \\
    &= \big[ \frac{(\alpha + \beta - 1)!}{(\alpha - 1)!(\beta - 1)!} \big]^2 \int \theta^{\alpha + \sum_n x_i - 1} (1- \theta)^{n - \sum_n x_i + \beta - 1} d\theta \\
    &\int \psi^{\alpha + \sum_n z_n - 1} (1- \psi)^{n - \sum_n z_n + \beta - 1} d\psi, \text{where } z_n = (1-x_n)*y_n + (1-y_n)*x_n \\
    & \text{note } \sum_n x_n = m_1, \sum_n z_n = m_2, \text{ by conjugate priors with bernoulli and beta distributions} \\
    &= \big[ \frac{(\alpha + \beta - 1)!}{(\alpha - 1)!(\beta - 1)!} \big]^2 \frac{\Gamma(\beta + n - m_1) \Gamma(\alpha + m_1)}{\Gamma(\alpha + \beta + n)} \frac{\Gamma(\beta + n - m_2) \Gamma(\alpha + m_2)}{\Gamma(\alpha + \beta + n)} \\
    &=  \frac{(\alpha + m_1 - 1)!(\beta + n - m_1 -1)!(\alpha + \beta - 1)! }{(\alpha - 1)!(\beta - 1)!(\alpha + \beta + n - 1)!} \frac{(\alpha + m_2 - 1)!(\beta + n - m_2 - 1)!(\alpha + \beta - 1)!}{(\alpha - 1)!(\beta - 1)!(\alpha + \beta + n - 1)!}
\end{align}

\section{Lemma \ref{lemma:intervention_across_samples} and its proof}
\label{sec:proof_of_lemma_intervention_across_samples}

\begin{restatable}[Intervention effect on differing positions]{lemma}{interventionacrosssamples}
\label{lemma:intervention_across_samples}
    Given a graph $\mathcal{G}$ and let $P$ be the distribution for the ICM generative process with respect to $\mathcal{G}$. Let $\mathbf{X}$ and $\mathbf{Y}$ be two disjoint sets such that $\mathbf{X}$ is the intervention set and $\mathbf{Y}$ is the target set. Let $\mathbf{N}_\mathbf{X}:= \{n: X_{i;n} \in \mathbf{X} \}$ be the set of position indices being intervened, and similarly $\mathbf{N}_\mathbf{Y}$ be the set of position indices being targeted to observe. Assume $\mathbf{N}_\mathbf{X} \cap \mathbf{N}_\mathbf{Y} = \emptyset$. Then, 
\begin{align}
    P(\mathbf{Y} \mid do(\mathbf{X} = \mathbf{x})) = P(\mathbf{Y}), 
\end{align}
\end{restatable}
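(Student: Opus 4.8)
The plan is to start from the mixture-of-conditionally-i.i.d.\ representation of the post-interventional density and marginalise out every variable sitting at an intervened position. Writing $\boldsymbol{\theta} = (\boldsymbol{\theta}_1,\dots,\boldsymbol{\theta}_d)$ for the latent Causal de Finetti parameters, Definition \ref{def:causal_effect_in_exchangeable} applied to Eq.~\eqref{eq:icm_generative_process} gives
\[
p(\mathbf{x}_{:;[N]}\mid do(\mathbf{X}=\hat{\mathbf{x}})) = \int \prod_{n=1}^N g_n(\mathbf{x}_{:;n}\mid\boldsymbol{\theta})\, d\nu_1(\boldsymbol{\theta}_1)\cdots d\nu_d(\boldsymbol{\theta}_d),
\]
where $g_n(\mathbf{x}_{:;n}\mid\boldsymbol{\theta}) = \prod_{i:\,X_{i;n}\in\mathbf{X}}\delta(x_{i;n}=\hat{x}_{i;n})\prod_{i:\,X_{i;n}\notin\mathbf{X}} p(x_{i;n}\mid \textbf{pa}_{i;n}^{\mathcal{G}},\boldsymbol{\theta}_i)$; the crucial point is that, conditional on $\boldsymbol{\theta}$, the positions are independent. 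First I would sum (integrate) out every variable whose position index lies in $\mathbf{N}_\mathbf{X}$; since $\mathbf{N}_\mathbf{X}\cap\mathbf{N}_\mathbf{Y}=\emptyset$, the set $\mathbf{Y}$ has no variable at those positions, so this is a legitimate first step towards the marginal of $\mathbf{Y}$.

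The key step is a one-position identity: for each $n\in\mathbf{N}_\mathbf{X}$ and every value of $\boldsymbol{\theta}$, $\sum_{\mathbf{x}_{:;n}} g_n(\mathbf{x}_{:;n}\mid\boldsymbol{\theta}) = 1$. This holds because inside a single position the graph $\text{ICM}(\mathcal{G})$ is exactly a copy of $\mathcal{G}$ (the only cross-position coupling is through the shared $\boldsymbol{\theta}_i$). The intervened factors are pure deltas that do not reference their parents, so each picks out $\hat{x}_{i;n}$ and contributes $1$ upon summation; the remaining non-intervened factors $\prod_{i} p(x_{i;n}\mid\textbf{pa}_{i;n}^{\mathcal{G}},\boldsymbol{\theta}_i)$, with the fixed intervened values substituted into the relevant parent slots, form a valid chain-rule factorisation of a joint distribution over the non-intervened variables of position $n$, and hence sum to $1$ when these variables are eliminated in reverse topological order. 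I expect this to be the main obstacle, since it requires careful bookkeeping of the consistency constraint $|_{\mathbf{X}=\hat{\mathbf{x}}}$ and of which parents are fixed versus free; everything downstream is routine.

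Applying this identity removes all positions in $\mathbf{N}_\mathbf{X}$ and leaves
\[
\int \prod_{n\notin\mathbf{N}_\mathbf{X}}\prod_{i=1}^{d} p(x_{i;n}\mid\textbf{pa}_{i;n}^{\mathcal{G}},\boldsymbol{\theta}_i)\, d\nu_1(\boldsymbol{\theta}_1)\cdots d\nu_d(\boldsymbol{\theta}_d),
\]
which by Eq.~\eqref{eq:icm_generative_process} is precisely the \emph{observational} marginal $P(\{\mathbf{X}_{:;n}: n\notin\mathbf{N}_\mathbf{X}\})$: indeed, marginalising a whole position out of the original (non-intervened) process also contributes a factor $1$ conditional on $\boldsymbol{\theta}$, so the two marginals coincide. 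Finally, because $\mathbf{Y}$ is contained in $\{\mathbf{X}_{:;n}: n\notin\mathbf{N}_\mathbf{X}\}$, marginalising both sides down to $\mathbf{Y}$ yields $P(\mathbf{Y}\mid do(\mathbf{X}=\mathbf{x}))=P(\mathbf{Y})$. One could equally run the argument directly from the truncated factorisation of Theorem \ref{theorem:truncated_factorization}, but the conditional-i.i.d.\ form makes the per-position cancellation most transparent.
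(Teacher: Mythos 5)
Your proposal is correct and follows essentially the same route as the paper's proof: both start from the de~Finetti mixture representation in which positions are conditionally independent given $\boldsymbol{\theta}$, observe that the per-position post-interventional factor (deltas for intervened variables times the remaining conditionals with substituted values) integrates to $1$ for every fixed $\boldsymbol{\theta}$, and conclude that what remains is the observational marginal, from which $\mathbf{Y}$ is extracted. Your write-up actually justifies the key normalisation step (via the within-position chain-rule factorisation in reverse topological order) more explicitly than the paper, which simply asserts it; the only cosmetic difference is that you eliminate the positions in $\mathbf{N}_\mathbf{X}$ first whereas the paper eliminates all positions outside $\mathbf{N}_\mathbf{Y}$ at once.
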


\begin{proof}[Proof of Lemma \ref{lemma:intervention_across_samples}]
Given an ICM generative process, the sequence is exchangeable by definition. By causal de Finetti's theorems, the joint distribution can be represented as 
    \begin{align}
        p(\mathbf{x}_{:;1}, \ldots, \mathbf{x}_{:;N}) &= \int \ldots \int \prod_n p(\mathbf{x}_{:;n} | \boldsymbol{\theta}_1, \ldots, \boldsymbol{\theta}_d) d\nu(\boldsymbol{\theta}_1) \ldots d\nu(\boldsymbol{\theta}_d) \\
        &= \int \ldots \int \prod_{n \in \mathbf{N}_\mathbf{Y}} p(\mathbf{x}_{:;n} | \boldsymbol{\theta}_1, \ldots, \boldsymbol{\theta}_d) 
        \prod_{n \not \in \mathbf{N}_\mathbf{Y}} p(\mathbf{x}_{:;n} | \boldsymbol{\theta}_1, \ldots, \boldsymbol{\theta}_d) d\nu(\boldsymbol{\theta}_1) \ldots d\nu(\boldsymbol{\theta}_d)
    \end{align}
The post-interventional distribution is:
\begin{align}
    &p(\mathbf{x}_{:;1}, \ldots, \mathbf{x}_{:;N} | do(\mathbf{X} = x)) \\
    &= \int \ldots \int \prod_{n \in \mathbf{N}_\mathbf{Y}} p(\mathbf{x}_{:;n} |do(\mathbf{X} =\mathbf{x}), \boldsymbol{\theta}_1, \ldots, \boldsymbol{\theta}_d) 
        \prod_{n \not \in \mathbf{N}_\mathbf{Y}} p(\mathbf{x}_{:;n} | do(\mathbf{X} =\mathbf{x}),  \boldsymbol{\theta}_1, \ldots, \boldsymbol{\theta}_d) d\nu(\boldsymbol{\theta}_1) \ldots d\nu(\boldsymbol{\theta}_d) \\
    &= \int \ldots \int \prod_{n \in \mathbf{N}_\mathbf{Y}} p(\mathbf{x}_{:;n} | \boldsymbol{\theta}_1, \ldots, \boldsymbol{\theta}_d) 
    \prod_{n \not \in \mathbf{N}_\mathbf{Y}}  p(\mathbf{x}_{:;n} | do(\mathbf{X} =\mathbf{x}),  \boldsymbol{\theta}_1, \ldots, \boldsymbol{\theta}_d) d\nu(\boldsymbol{\theta}_1) \ldots d\nu(\boldsymbol{\theta}_d)
\end{align}
The second equality follows from Definition \ref{def:causal_effect_in_exchangeable} on the operational meaning of do-operator in ICM generative processes: since $\mathbf{N}_\mathbf{X} \cap \mathbf{N}_\mathbf{Y}$, there is no overlapping variables between $\mathbf{X}_{:;n}$ and $\mathbf{X}$, for all $n \in \mathbf{N}_\mathbf{Y}$, which leads to no change of probability distribution.

Next, we proceed to integrating out every variable in $\mathbf{X}_{:;n}$ for $n \not \in \mathbf{N}_\mathbf{Y}$. Due to do-operation replaces intervened variable's conditional distribution as delta-distributions and keep the rest as conditional distributions while enforcing consistency over the intervened set of variables, the post-interventional distribution is in fact a conditional distribution by Theorem \ref{theorem:truncated_factorization} with enforced consistency. Thus integration on all variables lead to summation $1$.  
\begin{align}
    \int \ldots \int \prod_{n \not \in \mathbf{N}_\mathbf{Y}} p(\mathbf{x}_{:;n} | do(\mathbf{X} =\mathbf{x}),  \boldsymbol{\theta}_1, \ldots, \boldsymbol{\theta}_d) d\mathbf{x}_{:;n} = 1
\end{align}
Therefore, 
\begin{align}
    p(\mathbf{x}_{:;\mathbf{N}_\mathbf{Y}} | do(\mathbf{X}=\mathbf{x})) &= \int \ldots \int \prod_{n \in \mathbf{N}_\mathbf{Y}} p(\mathbf{x}_{:;n} | \boldsymbol{\theta}_1, \ldots, \boldsymbol{\theta}_d) d\nu(\boldsymbol{\theta}_1) \ldots d\nu(\boldsymbol{\theta}_d) \\ 
    &=  p(\mathbf{x}_{:;\mathbf{N}_\mathbf{Y}})
\end{align}
Marginalizing out every other variable in $\mathbf{X}_{:;\mathbf{N}_\mathbf{Y}}$ except the target variables in $\mathbf{Y}$, the result follows. 
\end{proof}

\section{Lemma \ref{lemma:intervention_within_samples} and its proof}
\label{sec:proof_intervention_within_samples}

\begin{restatable}[Intervention effect within the same position]{lemma}{interventionwithinsamples}
\label{lemma:intervention_within_samples}
    Given a graph $\mathcal{G}$ and let $P$ be the distribution for the ICM generative process with respect to $\mathcal{G}$. Let $\mathbf{X}$ be the intervention set such that it consists only $\mathbf{X}_{\mathbf{I};n}$ where $n \in \mathbf{S} \subseteq [N]$ and $\mathbf{I} \subseteq [d]$ is a set of variable indices. Let $\mathbf{Y}$ be the target set such that it consists only $\mathbf{X}_{\mathbf{J};n}$ where $n \in \mathbf{S} \subseteq [N]$ and $\mathbf{J} \subseteq [d]$. Note $\mathbf{I} \cap \mathbf{J} = \emptyset$. Then, 
\begin{align}
    P(\mathbf{Y} \mid do(\mathbf{X} = \hat{\mathbf{x}})) = \sum_{\textbf{pa}_{\mathbf{X}} }P(\mathbf{Y} \mid \hat{\mathbf{x}}, \textbf{pa}_{\mathbf{X}}) P(\textbf{pa}_{\mathbf{X}}) \, ,
\label{eq:parent-adjustment-exchangeable}
\end{align}
where $\textbf{PA}_\mathbf{X}$ denotes the parent set of intervened variables $\mathbf{X}$. 
\end{restatable}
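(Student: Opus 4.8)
The plan is to reduce the multi-position exchangeable intervention to a single ``collapsed'' Markov model over the position block $\mathbf{S}$, and then recognize the claimed identity as the classical parent-adjustment (back-door-with-parents) formula for that collapsed model. Throughout I write $\mathbf{X}=\mathbf{X}_{\mathbf{I};\mathbf{S}}$, $\mathbf{Y}=\mathbf{X}_{\mathbf{J};\mathbf{S}}$, and let $\mathbf{Z}:=\textbf{PA}_\mathbf{X}=\bigcup_{i\in\mathbf{I}}\textbf{PA}_{i;\mathbf{S}}^\mathcal{G}$ denote the observable parents of the intervened block that are not themselves in $\mathbf{X}$.

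First I would collapse away every position outside $\mathbf{S}$. Starting from the de Finetti representation \eqref{eq:icm_generative_process}, the post-interventional joint over all positions is obtained (Definition \ref{def:causal_effect_in_exchangeable}) by replacing $p(x_{i;n}\mid\textbf{pa}_{i;n}^\mathcal{G},\theta_i)$ with $\delta(X_{i;n}=\hat{x}_{i;n})$ exactly for the intervened pairs $i\in\mathbf{I},\,n\in\mathbf{S}$. For every position $n\notin\mathbf{S}$ all $d$ factors remain genuine conditionals, so summing over the whole array $\mathbf{X}_{:;n}$ in topological order gives $1$ for each fixed value of the latents; hence integrating out all positions outside $\mathbf{S}$ leaves only the product over $n\in\mathbf{S}$. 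Integrating out $\theta_i$ for $i\in\mathbf{I}$ (which no longer appear once their factors became $\delta$'s) then gives, using \eqref{eq:causal_conditional_definetti_faithfulness} position-by-position together with the independence of the $\theta_i$,
\begin{align}
    P(\mathbf{X}_{:;\mathbf{S}}\mid do(\mathbf{X}=\hat{\mathbf{x}})) = \prod_{i\notin\mathbf{I}} P(\mathbf{X}_{i;\mathbf{S}}\mid\textbf{PA}_{i;\mathbf{S}}^\mathcal{G})\,\big|_{\mathbf{X}=\hat{\mathbf{x}}},\qquad P(\mathbf{X}_{:;\mathbf{S}}) = \prod_{i=1}^d P(\mathbf{X}_{i;\mathbf{S}}\mid\textbf{PA}_{i;\mathbf{S}}^\mathcal{G}),
\end{align}
where the second identity is \eqref{eq:markov_decomposition_exchangeable} restricted to $\mathbf{S}$. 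Equivalently, the interventional law is the ratio of $P(\mathbf{X}_{:;\mathbf{S}})$ to $\prod_{i\in\mathbf{I}}P(\mathbf{X}_{i;\mathbf{S}}\mid\textbf{PA}_{i;\mathbf{S}}^\mathcal{G})$, evaluated at $\mathbf{X}=\hat{\mathbf{x}}$.

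The second observation is that $P(\mathbf{X}_{:;\mathbf{S}})=\prod_i P(\mathbf{X}_{i;\mathbf{S}}\mid\textbf{PA}_{i;\mathbf{S}}^\mathcal{G})$ is exactly a Markov factorization over the DAG $\mathcal{G}^{\mathbf{S}}$ whose vertices are the super-nodes $V_i:=\mathbf{X}_{i;\mathbf{S}}$ with $V_k\to V_i$ iff $X_k\to X_i$ in $\mathcal{G}$, since $\textbf{PA}_{i;\mathbf{S}}^\mathcal{G}=\bigcup_{k\in\textbf{PA}_i}V_k$. Under this identification the block intervention $do(\mathbf{X}_{\mathbf{I};\mathbf{S}}=\hat{\mathbf{x}})$ is an ordinary hard intervention on the super-nodes indexed by $\mathbf{I}$, and the displayed ratio is precisely the i.i.d.\ truncated factorization \eqref{eq:truncated_factorization_iid} for $\mathcal{G}^{\mathbf{S}}$; the target $\mathbf{Y}$ is a subset of the non-intervened super-nodes (using $\mathbf{I}\cap\mathbf{J}=\emptyset$). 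I would then invoke the classical parent-adjustment formula: $\mathbf{Z}=\textbf{PA}_\mathbf{X}$ contains no descendant of $\mathbf{X}$ and blocks every back-door path into $\mathbf{X}$, hence is a valid adjustment set relative to $(\mathbf{X},\mathbf{Y})$ in $\mathcal{G}^{\mathbf{S}}$, yielding $P(\mathbf{Y}\mid do(\mathbf{X}=\hat{\mathbf{x}}))=\sum_{\textbf{pa}_\mathbf{X}}P(\mathbf{Y}\mid\hat{\mathbf{x}},\textbf{pa}_\mathbf{X})P(\textbf{pa}_\mathbf{X})$. For a self-contained derivation I would instead marginalize the ratio over all non-target, non-parent super-nodes and use $P(\mathbf{Z}\mid do(\mathbf{X}))=P(\mathbf{Z})$ (parents are non-descendants of the intervention, by the de Finetti independences used as in Lemma \ref{lemma:intervention_across_samples}) together with $P(\mathbf{Y}\mid\mathbf{X},\mathbf{Z},do(\mathbf{X}))=P(\mathbf{Y}\mid\mathbf{X},\mathbf{Z})$.

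The main obstacle is the first step: justifying that marginalizing out every position outside $\mathbf{S}$ genuinely collapses the interventional law to the restricted block factorization, because the per-block factors $P(\mathbf{X}_{i;\mathbf{S}}\mid\cdot)$ couple positions through the shared latent $\theta_i$ and are \emph{not} products over $n$. Arguing at the level of the de Finetti integral — where each full-position sum integrates to $1$ and the intervened $\theta_i$ drop out — is what makes the collapse rigorous; once it is established, the reduction to the super-node DAG and the parent-adjustment conclusion are standard. A secondary point to handle is the convention that $\textbf{PA}_\mathbf{X}$ excludes any intervened super-nodes that happen to be parents of other intervened super-nodes, so that the adjustment sum ranges only over genuinely random parents.
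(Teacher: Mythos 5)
Your overall route is the same as the paper's: restrict attention to the position block $\mathbf{S}$, write the observational block law as the Markov factorization $P(\mathbf{X}_{:;\mathbf{S}})=\prod_i P(\mathbf{X}_{i;\mathbf{S}}\mid\textbf{PA}_{i;\mathbf{S}}^\mathcal{G})$, realize the intervention as deletion of (equivalently, division by) the factors indexed by $\mathbf{I}$, and marginalize down to $\mathbf{Y},\mathbf{X},\textbf{PA}_\mathbf{X}$. Two things you do differently are welcome: you make explicit the collapse of positions outside $\mathbf{S}$ at the level of the de Finetti integral (each full array $\mathbf{X}_{:;n}$, $n\notin\mathbf{S}$, integrates to $1$ given the latents, after which the $\theta_i$ with $i\in\mathbf{I}$ integrate away), whereas the paper simply works with the block from the start and defers everything to a final ``integrate out'' step; and your super-node DAG $\mathcal{G}^{\mathbf{S}}$ is a clean way of saying that the block problem is formally an i.i.d.\ Markovian problem, which the paper leaves implicit.

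The genuine gap is in how you discharge the last step. You invoke the back-door criterion with the premise that $\mathbf{Z}=\textbf{PA}_\mathbf{X}$ ``contains no descendant of $\mathbf{X}$.'' For a set-valued intervention this premise is not automatic: if $\mathcal{G}$ contains $X_1\to W\to X_2$ with $1,2\in\mathbf{I}$ and $W\notin\mathbf{X}$, then $W\in\textbf{PA}_\mathbf{X}$ is a descendant of an intervened node, so the first back-door condition fails and the single-node ``adjustment for direct causes'' theorem cannot simply be cited. The same issue infects your self-contained alternative, since $P(\mathbf{Z}\mid do(\mathbf{X}))=P(\mathbf{Z})$ also presupposes that $\mathbf{Z}$ consists of non-descendants of $\mathbf{X}$. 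What actually has to be established --- and what the paper spends its effort on --- is the identity
\begin{align}
\prod_{i\in\mathbf{I}} P(\mathbf{X}_{i;\mathbf{S}}\mid\textbf{PA}_{i;\mathbf{S}}^\mathcal{G}) \;=\; P(\mathbf{X}\mid\textbf{PA}_\mathbf{X}),
\end{align}
which the paper argues via a reversed topological ordering of the intervened indices, the chain rule, the observation that parents of non-descendants are non-descendants, and the local Markov property $X_{i;\mathbf{S}}\ind\textbf{ND}_{i;\mathbf{S}}\mid\textbf{PA}_{i;\mathbf{S}}$. Once that identity is in hand, dividing $P(\mathbf{X}_{:;\mathbf{S}})$ by $P(\mathbf{X}\mid\textbf{PA}_\mathbf{X})=P(\mathbf{X},\textbf{PA}_\mathbf{X})/P(\textbf{PA}_\mathbf{X})$ and summing out everything outside $\mathbf{Y}\cup\mathbf{X}\cup\textbf{PA}_\mathbf{X}$ yields Eq.~\eqref{eq:parent-adjustment-exchangeable} directly, with no appeal to a graphical adjustment criterion. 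So you should replace the back-door citation by this product-to-joint-conditional argument (or impose the restriction that no intervened index is an ancestor of another's parent, which is the regime where your criterion-based shortcut is literally valid).
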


\begin{proof}[Proof of Lemma \ref{lemma:intervention_within_samples}]
The proof follows a similar idea as in the case of i.i.d. data, here we include it for completeness. Note the joint distribution can be written as (by argument in Proof of Theorem \ref{theorem:truncated_factorization})
\begin{align}
    P(\mathbf{X}_{:;\mathbf{N}}) = \prod_i P(\mathbf{X}_{i;\mathbf{N}} | \textbf{PA}_{i;\mathbf{N}}).
\end{align}
Then when intervening on $\mathbf{X}$ which shares the identical set of variable indices across different samples, do-operation means assigning $P(\mathbf{X}_{i;\mathbf{N}}|\textbf{PA}_{i;\mathbf{N}})$ to delta-distribution for all $i \in \mathbf{I}$. This can equivalently be expressed as a division of conditional distributions:
\begin{align}
     P(\mathbf{X}_{:;\mathbf{N}} | do(\mathbf{X} = \mathbf{x})) = \begin{cases}
         \frac{P(\mathbf{X}_{:;\mathbf{N}})}{\prod_{i\in \mathbf{I}} P(\mathbf{x}_{i; \mathbf{N}}| \textbf{PA}_{i; \mathbf{N}})  }
        & \text{when } \mathbf{X} = \mathbf{x} \\
        0 & \text{else}
    \end{cases}
\end{align}
Next we proceed to show
$$\prod_{i\in \mathbf{I}} P(\mathbf{X}_{i; \mathbf{N}}| \textbf{PA}_{i; \mathbf{N}}) = P(\mathbf{X}_{\mathbf{I};\mathbf{N}} | \textbf{PA}_{\mathbf{I};\mathbf{N}}) = P(\mathbf{X} | \textbf{PA}_\mathbf{X})$$ 
Without loss of generality, we can always order $\mathbf{X}_{\mathbf{I};\mathbf{N}}$ according to reversed topological ordering such that all $X_i$'s non-descendants will be placed after the variable index $i$. Note if $X_j$ is the non-descendant of $X_i$, then so is $\textbf{PA}_j$. If not, suppose some variable in $\textbf{PA}_j$ is a descendant of $X_i$, then $X_j$ is a descendant of that variable, so $X_j$ is a descendant of $X_i$, contradiction. This means after re-ordering due to $X_{i;\mathbf{N}} \ind ND_{i;\mathbf{N}} | \textbf{PA}_{i;\mathbf{N}}$, where $ND_{i;\mathbf{N}} = \textbf{PA}_{\mathbf{I} \backslash \{i\}; \mathbf{N}} \cup \mathbf{X}_{>i;\mathbf{N}}$ in this case, we have: 
\begin{align}
    P(\mathbf{X}_{\mathbf{I};\mathbf{N}} | \textbf{PA}_{\mathbf{I};\mathbf{N}}) = \prod_{i} P(\mathbf{X}_{i;\mathbf{N}} | \textbf{PA}_{\mathbf{I};\mathbf{N}}, \mathbf{X}_{>i;\mathbf{N}}) = \prod_{i\in \mathbf{I}} P(\mathbf{X}_{i; \mathbf{N}}| \textbf{PA}_{i; \mathbf{N}}) 
\end{align}
Then, 
\begin{align}
    P(\mathbf{X}_{:;\mathbf{N}} | do(\mathbf{X} = \hat{\mathbf{x}})) = \frac{P(\mathbf{X}_{:;\mathbf{N}})}{P(\hat{\mathbf{x}}, \textbf{PA}_\mathbf{X})}*P(\textbf{PA}_\mathbf{X})
\end{align}
Integrating on every variable other than those contained in $\mathbf{Y}$, we have
\begin{align}
    P(\mathbf{Y} | do(\mathbf{X} = \hat{\mathbf{x}})) = \sum_{\textbf{pa}_\mathbf{X}} P(\mathbf{Y} | \hat{\mathbf{x}}, \textbf{pa}_\mathbf{X}) P(\textbf{pa}_\mathbf{X})
\end{align}
\end{proof}

\section{\textit{Do-Finetti} Algorithm}
\label{sec:algorithm}
Do Finetti algorithm recovers graphical estimation and causal effect estimation simultaneously through collecting multi-environment data, that is marginal copies of an exchangeable process. The algorithm combines \emph{Causal-de-Finetti} algorithm developed in \cite{guo_causal_2023} and the truncated factorization for ICM generative processes developed in Section \ref{sec:causal_effect_markovian_model_exchangeable} of this paper.
\begin{algorithm}[H]
\DontPrintSemicolon
  \KwInput{For $e \in \mathcal{E}$, we have $(X^e_{1;n}, \ldots, X^e_{d;n})_{n=1}^{N}$ where $X^e_{i;n}$ denotes the $i$-th variable observed at $n$-th position in environment $e$. Let $N$ denote the number of observations in each environment $e$. Assume $N \geq 2$. Let $\mathbf{X}$ and $\mathbf{Y}$ be two disjoint subsets of variables such that $\mathbf{X}, \mathbf{Y} \subseteq \{(X_{1;n}, \ldots, X_{d;n})\}_{n=1}^N$.}
  \KwOutput{$P(\mathbf{Y} = \mathbf{y} | do(\mathbf{X} = \mathbf{x}))$}
  \KwStepone{Run \emph{Causal-de-Finetti} Algorithm on multi-environment data and return estimated $\hat{\mathcal{G}}$.}
  \KwSteptwo{Estimate the probability distribution from grouped data $\hat{P}(\mathbf{X}_{:;1}, \ldots, \mathbf{X}_{:;N})$, e.g., histograms for discrete variables or kernel density estimation \citep{Simonoff1996} for continuous variables.}
  \KwStepthree{Marginalize out variables not included in $\mathbf{X} \cup \mathbf{Y}$ according to the truncated factorization for ICM generative processes with $\mathcal{G} = \hat{\mathcal{G}}$ (see Eq. \ref{eq:truncated_factorization_exchangeable} or below):
  \begin{align*}
     p(\mathbf{x}_{:;1}, \ldots, \mathbf{x}_{:;N} | do(\mathbf{X} = \mathbf{x})) = \prod_{i \in I_\mathbf{X}} p(\mathbf{x}_{i;[\neg \mathbf{N}_i]} | \textbf{pa}_{i;[\neg \mathbf{N}_i]}^\mathcal{G}) \prod_{i \not \in I_\mathbf{X}} p(\mathbf{x}_{i;[N]} | \textbf{pa}_{i;[N]}^\mathcal{G}) \big|_{\mathbf{X}=\mathbf{x}}
  \end{align*}}
  \KwStepfour{Return $P(\mathbf{Y} = \mathbf{y} | do(\mathbf{X} = \mathbf{x})$}
\caption{"Do-Finetti" Algorithm: causal effect estimation in ICM-generative processes}
\label{alg:mvcausal_effect_exchangeable}
\end{algorithm}

\section{Proof of Theorem \ref{thm:causal_effect_identifiability_icm_without_graph}}

\begin{theorem}[Unique graphical identification theorem \citep{guo_causal_2023}]
\label{thm:graphical-identifiability-causal-de-finetti}
Consider the set of distributions that are both Markov and faithful to $ICM(\mathcal{G})$, denoted as $\mathcal{E}(\mathcal{G})$.
Then, 
\begin{equation}
    \mathcal{E}(\mathcal{G}_1) = \mathcal{E}(\mathcal{G}_2) \  \text{if and only if} \  \mathcal{G}_1 = \mathcal{G}_2
\end{equation}
\end{theorem}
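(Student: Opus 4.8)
The plan is to prove the two directions separately, with the forward (``only if'') direction carrying all the content. The reverse direction is immediate: if $\mathcal{G}_1=\mathcal{G}_2$, then $ICM(\mathcal{G}_1)=ICM(\mathcal{G}_2)$ as ADMGs, since the ICM operator (Def.~\ref{def:icm-operator}) is a deterministic function of $\mathcal{G}$; the two families are then defined by the same separation model and so coincide.

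For the only-if direction I would first reduce the set-level equality to a purely graphical claim. By definition $\mathcal{E}(\mathcal{G})$ consists of exactly those $P$ with $\mathcal{I}(P)=\mathcal{I}(ICM(\mathcal{G}))$ (Markov gives $\mathcal{I}(ICM(\mathcal{G}))\subseteq\mathcal{I}(P)$, faithful gives the reverse inclusion). I would then invoke nonemptiness of each family: a generic ICM generative process with respect to $\mathcal{G}$ is both Markov and faithful to $ICM(\mathcal{G})$, so $\mathcal{E}(\mathcal{G})\neq\emptyset$. Picking any $P\in\mathcal{E}(\mathcal{G}_1)=\mathcal{E}(\mathcal{G}_2)$ forces $\mathcal{I}(ICM(\mathcal{G}_1))=\mathcal{I}(P)=\mathcal{I}(ICM(\mathcal{G}_2))$, i.e.\ the two ADMGs are m-separation equivalent. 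It therefore suffices to show that the map $\mathcal{G}\mapsto\mathcal{I}(ICM(\mathcal{G}))$ is injective on DAGs.

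To establish injectivity I would exhibit an explicit recovery of $\mathcal{G}$ from $\mathcal{I}(ICM(\mathcal{G}))$ in two stages. First, the skeleton and v-structures (hence the CPDAG, i.e.\ the Markov equivalence class of $\mathcal{G}$) are read off from the within-slice independences: restricting attention to conditioning sets drawn from a single position $n$, every m-connecting path that ventures into another slice must cross a bidirected edge $X_{k;n}\leftrightarrow X_{k;m}$, and any collider it meets in slice $m$ can never be activated, because slice-$m$ nodes have no descendants outside slice $m$ (the only cross-slice edges are bidirected) and the conditioning set lives in slice $n$. The induced independence model over $\{X_{1;n},\dots,X_{d;n}\}$ thus equals $\mathcal{I}(\mathcal{G})$, and standard constraint-based identification returns $\mathcal{G}$'s skeleton and v-structures. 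Second --- the new ingredient --- I would orient every remaining Markov-equivalence-ambiguous edge using the cross-position relation $X_{i;[n]}\ind \overline{\textbf{ND}}_{i;[n]},\textbf{ND}_{i;n+1}\mid \textbf{PA}_{i;[n]}$ guaranteed by causal de Finetti. For an undirected edge between $X_i$ and $X_j$ in the CPDAG, the orientation is exposed by the asymmetry in which conditioning set renders $X_{i;n}$ and $X_{j;m}$ ($n\neq m$) independent: conditioning on the parent of the tail variable decouples the two positions, whereas the collider at the head keeps them m-connected through the shared-latent (bidirected) paths. Carrying this out at the level of m-separation in $ICM(\mathcal{G})$ orients each edge uniquely and pins down $\mathcal{G}$, giving $\mathcal{G}_1=\mathcal{G}_2$.

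I expect the main obstacle to be this second stage: producing a clean, fully general m-separation argument that the cross-position independences distinguish the two orientations of every edge within a Markov equivalence class, while correctly bookkeeping the confounding paths created by the bidirected edges $X_{i;n}\leftrightarrow X_{i;m}$ and the collider paths that conditioning may open across slices. This is precisely the point at which exchangeable data is strictly more informative than i.i.d.\ data, so it cannot be delegated to the classical CPDAG theory and must be argued directly; by contrast, the skeleton/v-structure recovery and the nonemptiness reduction are comparatively routine.
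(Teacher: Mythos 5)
Your proposal cannot be compared against an in-paper argument, because the paper never proves this statement: it is imported verbatim from \citep{guo_causal_2023} (it is their unique-structure-identification theorem) and is used purely as a black box inside the proof of Theorem \ref{thm:causal_effect_identifiability_icm_without_graph}. Judged on its own merits, your architecture is the right one: the reverse direction is indeed immediate; the forward direction correctly reduces, via nonemptiness, to injectivity of the map $\mathcal{G}\mapsto\mathcal{I}(\mathrm{ICM}(\mathcal{G}))$; and your stage-one claim --- that m-separation queries confined to a single slice reproduce exactly $\mathcal{I}(\mathcal{G})$ --- is correct by essentially the argument you sketch: a path that ventures into slice $m$ enters and leaves that slice through bidirected edges, hence with arrowheads at both ends, hence contains a collider inside slice $m$; since directed edges never cross slices, that collider's descendants all stay in slice $m$, so it can never be an ancestor of a slice-$n$ conditioning set, and the path is blocked.

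The genuine gap is stage two, which you state only as a heuristic and yourself flag as the unresolved obstacle; since stage two is where the entire content of the theorem lives (stage one only recovers the Markov equivalence class, which i.i.d.\ data already gives), the proposal as written does not prove the theorem. The step can in fact be closed cleanly, with less bookkeeping than you anticipate. Suppose $\mathcal{I}(\mathrm{ICM}(\mathcal{G}_1))=\mathcal{I}(\mathrm{ICM}(\mathcal{G}_2))$ but $\mathcal{G}_1\neq\mathcal{G}_2$; stage one forces equal skeletons, so some edge is $X_i\to X_j$ in $\mathcal{G}_1$ and $X_j\to X_i$ in $\mathcal{G}_2$. Consider the single query $X_{j;1}\ind X_{i;2}\mid \textbf{PA}^{\mathcal{G}_1}_{j;1}$; note the conditioning set is the slice-$1$ parents of the \emph{head} (which include the tail $X_{i;1}$), not, as you wrote, ``the parent of the tail.'' In $\mathrm{ICM}(\mathcal{G}_1)$ this independence holds: a path leaving $X_{j;1}$ through a parent makes that parent a non-collider lying in the conditioning set; a path leaving through a child or through a bidirected edge can only reach $X_{i;2}$ by passing a collider, and every such collider is either a slice-$1$ descendant of $X_{j;1}$ (hence not an ancestor of $\textbf{PA}^{\mathcal{G}_1}_{j;1}$, by acyclicity) or lies in a slice other than $1$ (hence not an ancestor of anything in slice $1$). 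In $\mathrm{ICM}(\mathcal{G}_2)$ it fails: the path $X_{j;1}\to X_{i;1}\leftrightarrow X_{i;2}$ has the single intermediate node $X_{i;1}$, a collider contained in the conditioning set, so the path is m-connecting. This contradicts equality of the two models and yields injectivity; observe that v-structure agreement is never needed, only the skeleton. A second, smaller gap: nonemptiness of $\mathcal{E}(\mathcal{G})$ is not a convenience but a necessity (were the families allowed to be empty, the forward direction would be false), and ``a generic ICM generative process is faithful'' itself requires proof --- e.g.\ a Meek-style genericity argument, together with the observation that $\mathrm{ICM}(\mathcal{G})$ imposes only countably many m-separation constraints, each violated on a null set of parameters, so their union is still null.
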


\effectidentificationicm*

\begin{proof}[Proof of Theorem \ref{thm:causal_effect_identifiability_icm_without_graph}]
Denote $\mathcal{P}_G$ be the set of distributions that is Markov and faithful to the DAG $G$. Consider a statistical model $\mathcal{P}:=\{\mathcal{P}_G: G \text{ is a DAG}\}$. We say $G$ is identifiable from $\mathcal{P}$ if $G \to \mathcal{P}_G$ is one-to-one, i.e., $\mathcal{P}_{G_1} = \mathcal{P}_{G_2}$ if $G_1 = G_2$. This directly follows from Theorem 5 in \citep{guo_causal_2023}. As $G$ is identifiable from $P$ theoretically, given the pair $P, G$, by Theorem \ref{theorem:truncated_factorization} in this paper, we know the truncated factorization over all observable variables. Through marginalization for arbitrary target set $\mathbf{Y}$ and arbitrary intervention set $\mathbf{X}$, the causal query is uniquely computable. 
\end{proof}

\section{Experimental Details}
\label{sec:experiment-details}
The data-generating process is described as below for each of the bivariate graph:
\begin{align*}
    &\theta^e \sim \text{Beta}(\alpha, \beta), \psi^e \sim \text{Beta}(\alpha, \beta) \\
    X \to Y: \, &X^e_i := \text{Ber}(\theta^e), Y^e_i := \text{Ber}(\psi^e) \oplus X^e_i \\
    Y \to X: \, &X^e_i := \text{Ber}(\theta^e) \oplus Y^e_i, Y^e_i := \text{Ber}(\psi^e)  \\
    X \ind Y: \, &X^e_i := \text{Ber}(\theta^e), Y^e_i := \text{Ber}(\psi^e) 
\end{align*}
where $\oplus$ denotes xor operation and $X_i^e, Y_i^e$ denotes variable generated at $i$-th position in environment $e$ and set $\alpha = 1, \beta = 3$. 

The experiments can be reproduced using single laptop with CPUs with a time estimate within 5 minutes. The code is building on top of \cite{guo_causal_2023} under license CC-BY 4.0. For each experiment, we randomly generate data corresponding to one of the three graphs. We randomly choose the intervened variable $X$ = $X_1$, $Y_1$, $X_2$ or $Y_2$ and randomly determine the intervened value $\hat{x}$ to be $0$ or $1$. Our goal is to estimate the post-interventional distribution $P(X_1, Y_1, X_2, Y_2 | do(X = \hat{x}))$ as accuracy as possible. Our mean squared error loss is the predicted $\hat{P}$ with analytic $P$ summing over all possible enumeration of $X_1, Y_1, X_2, Y_2$ values. We use PC algorithm for i.i.d. graph identification and truncated factorization as in Eq. \ref{eq:truncated_factorization_iid} for effect estimatinon given estimated DAG. Similarly, we run \textit{Do-Finetti} algorithm. We compare with the analytic solutions that can be derived due to Beta-bernoulli distributions are conjugate priors towards each other. Section \ref{subsec:analytic-solution} shows an example of derivation. To clearly see errors in MSE loss in causal effect estimation is not purely driven by graphical misclassification, we provide true oracle DAG to each of the method -- \textit{Do-Finetti-w-true-dag} and \textit{IID-w-true-dag} -- and observe that even with inifinte data, IID can never achieves near-zero causal effect estimation errors as shown in Fig. \ref{fig:doF-bivariate-mse}. This means the traditional truncated factorization fails to characterize causal effects in ICM generative processes. 

\subsection{Analytic solution}
\label{subsec:analytic-solution}
We choose Beta and bernoulli distributions in particular due to they are conjugate prior properties to allow us to calculate easily their analytic solutions. Note the distributions can be arbitrary and we can approximate the integration via standard methods, e.g. histograms or kernel density estimations. 
Taking the example of $X \to Y$ generated under the data-generating process described in Experiment section, then by truncated factorization in ICM generative processes we have: 
\begin{align}
    P(Y_1 | do(x_1), X_2, Y_2) &= P(Y_1 | x_1, X_2, Y_2) \\
    &= \frac{P(Y_1, Y_2 | x_1, X_2)}{P(Y_2 | X_2)}
\end{align}
The second equality is due to ICM generative processes properties $Y_2 \ind X_1 | X_2$.
Note by causal de Finetti, we know: 
\begin{align}
    P(y_1, y_2 | x_1, x_2) = \int p(y_i | x_i, \psi) p(\psi) d\psi
\end{align}
Given Bernoulli distribution, we can re-write the assignment function $Y_i^e = \text{Ber}(\psi^e) \oplus X_i^e$ in terms of probability distribution, where $z = (1-x)*y + (1-y)*x$ and $p(y|x, \psi) = \psi^{z} (1-\psi)^{1-z}$. Note $p(\psi)$ follows a Beta distribution such that $p(\psi) = \frac{\psi^{\alpha-1}(1-\psi)^{\beta-1}}{B(\alpha, \beta)}$. The marginal likelihood from manipulation gives $P(y_1, y_2 | x_1, x_2) = \frac{B(\alpha_2, \beta_2)}{B(\alpha, \beta)}$, where $\alpha_N = \sum_n z_n + \alpha, \beta_N = N - \sum_n z_n + \beta$. Putting it all together, 
\begin{align}
    P(y_1 | do(x_1), x_2, y_2) = \frac{B(\alpha + z_1 + z_2, 2 - z_1 - z_2 + \beta)}{B(\alpha + z_2, 1 + \beta - z_2)}
\end{align}
Recall $B(\alpha, \beta) = \frac{\Gamma(\alpha)\Gamma(\beta)}{\Gamma(\alpha + \beta)}$ and $\Gamma(n) = (n-1)!$. Rearranging gives:
\begin{align}
    P(y_1 = 0 | do(x_1=0), x_2 = 0, y_2 = 0) &= \frac{B(\alpha, 2+\beta)}{B(\alpha, 1+\beta)} = \frac{\Gamma(\beta + 2)}{\Gamma(\alpha + \beta + 2)} \times \frac{\Gamma(\alpha + \beta + 1)}{\Gamma(\beta + 1)}\\
    &= \frac{\beta + 1}{\alpha + \beta + 1} = \frac{4}{5} = 0.8
\end{align}
Similarly, $P(y_1 | do(x_1)) = P(y_1 | x_1) = B(\alpha + z_1, 1 + \beta - z_1)$. Then, $P(y_1=0|do(x_1=0)) = \frac{\Gamma(\alpha)\Gamma(1 + \beta)}{\Gamma(1 + \alpha + \beta)} = \frac{1}{4}$ when $\alpha = 1, \beta = 3$. 

\section{Limitations and Broader Impacts}
\label{sec:broader_impacts}
\textbf{Limitations} This work has taken first steps in formulating what intervention means in exchangeable data satisfying the ICM principle. However, much is left to do: from understanding interventions to counterfactual queries, from Markovian models to semi-Markovian models. We note there is a whole world of possibilities in formalizing causality in exchangeable data settings: both from the traditional causality view -- e.g., \citep{karlsson2024detecting} perform confounder detection in multi-environment data -- and from the view of representation learning where \citep{reizinger2022jacobian} show that nonlinear ICA under certain assumptions can perform causal discovery in a de Finetti framework. We hope this work opens up possibilities to connect to interventional representation learning and the branches of causality research. \looseness = -1

\textbf{Broader Impacts} This paper is a foundational research that studies causal interventions under exchangeable contexts. It is intended to advance the field of machine learning and through studying causality, this work intends to provide control and understanding over the world and the current increasingly more opaque and `black-box' models to safeguard potential harms for society. However, we note, every innovation has the potential to be misused with malicious intent. For example, we foresee that once offering more control and understanding, users have the potential to have more precise manipulation. This paper at its current status still presents a huge gap between theory and practice. 

\newpage
\section*{NeurIPS Paper Checklist}

\begin{enumerate}

\item {\bf Claims}
    \item[] Question: Do the main claims made in the abstract and introduction accurately reflect the paper's contributions and scope?
    \item[] Answer: \answerYes{} 
    \item[] Justification: The main claims in the abstract and introduction are supported by the paper with each corresponding claim has a pointer to the paper's corresponding section.
    \item[] Guidelines:
    \begin{itemize}
        \item The answer NA means that the abstract and introduction do not include the claims made in the paper.
        \item The abstract and/or introduction should clearly state the claims made, including the contributions made in the paper and important assumptions and limitations. A No or NA answer to this question will not be perceived well by the reviewers. 
        \item The claims made should match theoretical and experimental results, and reflect how much the results can be expected to generalize to other settings. 
        \item It is fine to include aspirational goals as motivation as long as it is clear that these goals are not attained by the paper. 
    \end{itemize}

\item {\bf Limitations}
    \item[] Question: Does the paper discuss the limitations of the work performed by the authors?
    \item[] Answer: \answerYes{} 
    \item[] Justification: Please refer to conclusion section on the potential extensions.
    \item[] Guidelines:
    \begin{itemize}
        \item The answer NA means that the paper has no limitation while the answer No means that the paper has limitations, but those are not discussed in the paper. 
        \item The authors are encouraged to create a separate "Limitations" section in their paper.
        \item The paper should point out any strong assumptions and how robust the results are to violations of these assumptions (e.g., independence assumptions, noiseless settings, model well-specification, asymptotic approximations only holding locally). The authors should reflect on how these assumptions might be violated in practice and what the implications would be.
        \item The authors should reflect on the scope of the claims made, e.g., if the approach was only tested on a few datasets or with a few runs. In general, empirical results often depend on implicit assumptions, which should be articulated.
        \item The authors should reflect on the factors that influence the performance of the approach. For example, a facial recognition algorithm may perform poorly when image resolution is low or images are taken in low lighting. Or a speech-to-text system might not be used reliably to provide closed captions for online lectures because it fails to handle technical jargon.
        \item The authors should discuss the computational efficiency of the proposed algorithms and how they scale with dataset size.
        \item If applicable, the authors should discuss possible limitations of their approach to address problems of privacy and fairness.
        \item While the authors might fear that complete honesty about limitations might be used by reviewers as grounds for rejection, a worse outcome might be that reviewers discover limitations that aren't acknowledged in the paper. The authors should use their best judgment and recognize that individual actions in favor of transparency play an important role in developing norms that preserve the integrity of the community. Reviewers will be specifically instructed to not penalize honesty concerning limitations.
    \end{itemize}

\item {\bf Theory Assumptions and Proofs}
    \item[] Question: For each theoretical result, does the paper provide the full set of assumptions and a complete (and correct) proof?
    \item[] Answer: \answerYes{} 
    \item[] Justification: Please see theoretical statements included in the paper with proofs included in the appendices.
    \item[] Guidelines:
    \begin{itemize}
        \item The answer NA means that the paper does not include theoretical results. 
        \item All the theorems, formulas, and proofs in the paper should be numbered and cross-referenced.
        \item All assumptions should be clearly stated or referenced in the statement of any theorems.
        \item The proofs can either appear in the main paper or the supplemental material, but if they appear in the supplemental material, the authors are encouraged to provide a short proof sketch to provide intuition. 
        \item Inversely, any informal proof provided in the core of the paper should be complemented by formal proofs provided in appendix or supplemental material.
        \item Theorems and Lemmas that the proof relies upon should be properly referenced. 
    \end{itemize}

    \item {\bf Experimental Result Reproducibility}
    \item[] Question: Does the paper fully disclose all the information needed to reproduce the main experimental results of the paper to the extent that it affects the main claims and/or conclusions of the paper (regardless of whether the code and data are provided or not)?
    \item[] Answer: \answerYes{} 
    \item[] Justification: We included detailed experimental setup descriptions in the appendix and also provided reproducible code in supplementary materials.
    \item[] Guidelines:
    \begin{itemize}
        \item The answer NA means that the paper does not include experiments.
        \item If the paper includes experiments, a No answer to this question will not be perceived well by the reviewers: Making the paper reproducible is important, regardless of whether the code and data are provided or not.
        \item If the contribution is a dataset and/or model, the authors should describe the steps taken to make their results reproducible or verifiable. 
        \item Depending on the contribution, reproducibility can be accomplished in various ways. For example, if the contribution is a novel architecture, describing the architecture fully might suffice, or if the contribution is a specific model and empirical evaluation, it may be necessary to either make it possible for others to replicate the model with the same dataset, or provide access to the model. In general. releasing code and data is often one good way to accomplish this, but reproducibility can also be provided via detailed instructions for how to replicate the results, access to a hosted model (e.g., in the case of a large language model), releasing of a model checkpoint, or other means that are appropriate to the research performed.
        \item While NeurIPS does not require releasing code, the conference does require all submissions to provide some reasonable avenue for reproducibility, which may depend on the nature of the contribution. For example
        \begin{enumerate}
            \item If the contribution is primarily a new algorithm, the paper should make it clear how to reproduce that algorithm.
            \item If the contribution is primarily a new model architecture, the paper should describe the architecture clearly and fully.
            \item If the contribution is a new model (e.g., a large language model), then there should either be a way to access this model for reproducing the results or a way to reproduce the model (e.g., with an open-source dataset or instructions for how to construct the dataset).
            \item We recognize that reproducibility may be tricky in some cases, in which case authors are welcome to describe the particular way they provide for reproducibility. In the case of closed-source models, it may be that access to the model is limited in some way (e.g., to registered users), but it should be possible for other researchers to have some path to reproducing or verifying the results.
        \end{enumerate}
    \end{itemize}

\item {\bf Open access to data and code}
    \item[] Question: Does the paper provide open access to the data and code, with sufficient instructions to faithfully reproduce the main experimental results, as described in supplemental material?
    \item[] Answer: \answerYes{} 
    \item[] Justification: We include reproducible code in the supplementary materials.
    \item[] Guidelines:
    \begin{itemize}
        \item The answer NA means that paper does not include experiments requiring code.
        \item Please see the NeurIPS code and data submission guidelines (\url{https://nips.cc/public/guides/CodeSubmissionPolicy}) for more details.
        \item While we encourage the release of code and data, we understand that this might not be possible, so “No” is an acceptable answer. Papers cannot be rejected simply for not including code, unless this is central to the contribution (e.g., for a new open-source benchmark).
        \item The instructions should contain the exact command and environment needed to run to reproduce the results. See the NeurIPS code and data submission guidelines (\url{https://nips.cc/public/guides/CodeSubmissionPolicy}) for more details.
        \item The authors should provide instructions on data access and preparation, including how to access the raw data, preprocessed data, intermediate data, and generated data, etc.
        \item The authors should provide scripts to reproduce all experimental results for the new proposed method and baselines. If only a subset of experiments are reproducible, they should state which ones are omitted from the script and why.
        \item At submission time, to preserve anonymity, the authors should release anonymized versions (if applicable).
        \item Providing as much information as possible in supplemental material (appended to the paper) is recommended, but including URLs to data and code is permitted.
    \end{itemize}

\item {\bf Experimental Setting/Details}
    \item[] Question: Does the paper specify all the training and test details (e.g., data splits, hyperparameters, how they were chosen, type of optimizer, etc.) necessary to understand the results?
    \item[] Answer: \answerYes{} 
    \item[] Justification: We include experimental details in the appendix and provide a description of experiments in the main text. 
    \item[] Guidelines:
    \begin{itemize}
        \item The answer NA means that the paper does not include experiments.
        \item The experimental setting should be presented in the core of the paper to a level of detail that is necessary to appreciate the results and make sense of them.
        \item The full details can be provided either with the code, in appendix, or as supplemental material.
    \end{itemize}

\item {\bf Experiment Statistical Significance}
    \item[] Question: Does the paper report error bars suitably and correctly defined or other appropriate information about the statistical significance of the experiments?
    \item[] Answer: \answerYes{} 
    \item[] Justification: We include error bars in our experimental figures.
    \item[] Guidelines:
    \begin{itemize}
        \item The answer NA means that the paper does not include experiments.
        \item The authors should answer "Yes" if the results are accompanied by error bars, confidence intervals, or statistical significance tests, at least for the experiments that support the main claims of the paper.
        \item The factors of variability that the error bars are capturing should be clearly stated (for example, train/test split, initialization, random drawing of some parameter, or overall run with given experimental conditions).
        \item The method for calculating the error bars should be explained (closed form formula, call to a library function, bootstrap, etc.)
        \item The assumptions made should be given (e.g., Normally distributed errors).
        \item It should be clear whether the error bar is the standard deviation or the standard error of the mean.
        \item It is OK to report 1-sigma error bars, but one should state it. The authors should preferably report a 2-sigma error bar than state that they have a 96\% CI, if the hypothesis of Normality of errors is not verified.
        \item For asymmetric distributions, the authors should be careful not to show in tables or figures symmetric error bars that would yield results that are out of range (e.g. negative error rates).
        \item If error bars are reported in tables or plots, The authors should explain in the text how they were calculated and reference the corresponding figures or tables in the text.
    \end{itemize}

\item {\bf Experiments Compute Resources}
    \item[] Question: For each experiment, does the paper provide sufficient information on the computer resources (type of compute workers, memory, time of execution) needed to reproduce the experiments?
    \item[] Answer: \answerYes{} 
    \item[] Justification: Include in the experimental detail descriptions in appendix. 
    \item[] Guidelines:
    \begin{itemize}
        \item The answer NA means that the paper does not include experiments.
        \item The paper should indicate the type of compute workers CPU or GPU, internal cluster, or cloud provider, including relevant memory and storage.
        \item The paper should provide the amount of compute required for each of the individual experimental runs as well as estimate the total compute. 
        \item The paper should disclose whether the full research project required more compute than the experiments reported in the paper (e.g., preliminary or failed experiments that didn't make it into the paper). 
    \end{itemize}
    
\item {\bf Code Of Ethics}
    \item[] Question: Does the research conducted in the paper conform, in every respect, with the NeurIPS Code of Ethics \url{https://neurips.cc/public/EthicsGuidelines}?
    \item[] Answer: \answerYes{} 
    \item[] Justification: This paper is a foundational research on causality which conforms with the code of ethics. 
    \item[] Guidelines:
    \begin{itemize}
        \item The answer NA means that the authors have not reviewed the NeurIPS Code of Ethics.
        \item If the authors answer No, they should explain the special circumstances that require a deviation from the Code of Ethics.
        \item The authors should make sure to preserve anonymity (e.g., if there is a special consideration due to laws or regulations in their jurisdiction).
    \end{itemize}

\item {\bf Broader Impacts}
    \item[] Question: Does the paper discuss both potential positive societal impacts and negative societal impacts of the work performed?
    \item[] Answer: \answerYes{} 
    \item[] Justification: We discuss both potential positive and negative societal impact in Appendix. 
    \item[] Guidelines:
    \begin{itemize}
        \item The answer NA means that there is no societal impact of the work performed.
        \item If the authors answer NA or No, they should explain why their work has no societal impact or why the paper does not address societal impact.
        \item Examples of negative societal impacts include potential malicious or unintended uses (e.g., disinformation, generating fake profiles, surveillance), fairness considerations (e.g., deployment of technologies that could make decisions that unfairly impact specific groups), privacy considerations, and security considerations.
        \item The conference expects that many papers will be foundational research and not tied to particular applications, let alone deployments. However, if there is a direct path to any negative applications, the authors should point it out. For example, it is legitimate to point out that an improvement in the quality of generative models could be used to generate deepfakes for disinformation. On the other hand, it is not needed to point out that a generic algorithm for optimizing neural networks could enable people to train models that generate Deepfakes faster.
        \item The authors should consider possible harms that could arise when the technology is being used as intended and functioning correctly, harms that could arise when the technology is being used as intended but gives incorrect results, and harms following from (intentional or unintentional) misuse of the technology.
        \item If there are negative societal impacts, the authors could also discuss possible mitigation strategies (e.g., gated release of models, providing defenses in addition to attacks, mechanisms for monitoring misuse, mechanisms to monitor how a system learns from feedback over time, improving the efficiency and accessibility of ML).
    \end{itemize}
    
\item {\bf Safeguards}
    \item[] Question: Does the paper describe safeguards that have been put in place for responsible release of data or models that have a high risk for misuse (e.g., pretrained language models, image generators, or scraped datasets)?
    \item[] Answer: \answerNA{} 
    \item[] Justification: This paper does not pose risks for data release or misuse. 
    \item[] Guidelines:
    \begin{itemize}
        \item The answer NA means that the paper poses no such risks.
        \item Released models that have a high risk for misuse or dual-use should be released with necessary safeguards to allow for controlled use of the model, for example by requiring that users adhere to usage guidelines or restrictions to access the model or implementing safety filters. 
        \item Datasets that have been scraped from the Internet could pose safety risks. The authors should describe how they avoided releasing unsafe images.
        \item We recognize that providing effective safeguards is challenging, and many papers do not require this, but we encourage authors to take this into account and make a best faith effort.
    \end{itemize}

\item {\bf Licenses for existing assets}
    \item[] Question: Are the creators or original owners of assets (e.g., code, data, models), used in the paper, properly credited and are the license and terms of use explicitly mentioned and properly respected?
    \item[] Answer: \answerYes{} 
    \item[] Justification: We cite causal de Finetti that this work's code is building on top of. 
    \item[] Guidelines:
    \begin{itemize}
        \item The answer NA means that the paper does not use existing assets.
        \item The authors should cite the original paper that produced the code package or dataset.
        \item The authors should state which version of the asset is used and, if possible, include a URL.
        \item The name of the license (e.g., CC-BY 4.0) should be included for each asset.
        \item For scraped data from a particular source (e.g., website), the copyright and terms of service of that source should be provided.
        \item If assets are released, the license, copyright information, and terms of use in the package should be provided. For popular datasets, \url{paperswithcode.com/datasets} has curated licenses for some datasets. Their licensing guide can help determine the license of a dataset.
        \item For existing datasets that are re-packaged, both the original license and the license of the derived asset (if it has changed) should be provided.
        \item If this information is not available online, the authors are encouraged to reach out to the asset's creators.
    \end{itemize}

\item {\bf New Assets}
    \item[] Question: Are new assets introduced in the paper well documented and is the documentation provided alongside the assets?
    \item[] Answer: \answerNA{} 
    \item[] Justification: The paper does not release new assets.
    \item[] Guidelines:
    \begin{itemize}
        \item The answer NA means that the paper does not release new assets.
        \item Researchers should communicate the details of the dataset/code/model as part of their submissions via structured templates. This includes details about training, license, limitations, etc. 
        \item The paper should discuss whether and how consent was obtained from people whose asset is used.
        \item At submission time, remember to anonymize your assets (if applicable). You can either create an anonymized URL or include an anonymized zip file.
    \end{itemize}

\item {\bf Crowdsourcing and Research with Human Subjects}
    \item[] Question: For crowdsourcing experiments and research with human subjects, does the paper include the full text of instructions given to participants and screenshots, if applicable, as well as details about compensation (if any)? 
    \item[] Answer: \answerNA{} 
    \item[] Justification: The paper does not involve human subjects.
    \item[] Guidelines:
    \begin{itemize}
        \item The answer NA means that the paper does not involve crowdsourcing nor research with human subjects.
        \item Including this information in the supplemental material is fine, but if the main contribution of the paper involves human subjects, then as much detail as possible should be included in the main paper. 
        \item According to the NeurIPS Code of Ethics, workers involved in data collection, curation, or other labor should be paid at least the minimum wage in the country of the data collector. 
    \end{itemize}

\item {\bf Institutional Review Board (IRB) Approvals or Equivalent for Research with Human Subjects}
    \item[] Question: Does the paper describe potential risks incurred by study participants, whether such risks were disclosed to the subjects, and whether Institutional Review Board (IRB) approvals (or an equivalent approval/review based on the requirements of your country or institution) were obtained?
    \item[] Answer: \answerNA{} 
    \item[] Justification: the paper does not involve crowdsourcing nor research with human subjects.
    \item[] Guidelines:
    \begin{itemize}
        \item The answer NA means that the paper does not involve crowdsourcing nor research with human subjects.
        \item Depending on the country in which research is conducted, IRB approval (or equivalent) may be required for any human subjects research. If you obtained IRB approval, you should clearly state this in the paper. 
        \item We recognize that the procedures for this may vary significantly between institutions and locations, and we expect authors to adhere to the NeurIPS Code of Ethics and the guidelines for their institution. 
        \item For initial submissions, do not include any information that would break anonymity (if applicable), such as the institution conducting the review.
    \end{itemize}

\end{enumerate}

\end{document}